\newcommand{\N}{\mathbb N}
\newcommand{\C}{\mathbb C}
\newcommand{\R}{\mathbb R}
\newcommand{\la}{\langle}
\newcommand{\ra}{\rangle}
\newcommand{\hi}{\mathcal H}
\newcommand{\ki}{\mathcal K}
\def\blfootnote{\xdef\@thefnmark{}\@footnotetext}
\newtheorem{thm}{Theorem}[section]
\newtheorem{cor}[thm]{Corollary}
\newtheorem{prop}[thm]{Proposition}
\newtheorem{rmk}[thm]{Remark}
\newtheorem{exm}[thm]{Example}
\newtheorem{deff}[thm]{Definition}
\begin{document}
\title{Generators of Quantum Markov Semigroups}

\author{George Androulakis}
\address{Department of Mathematics, University of South Carolina, 
Columbia, SC 29208}
\email{giorgis@math.sc.edu}

\author{Matthew Ziemke}
\address{Department of Mathematics, University of South Carolina, 
Columbia, SC 29208}
\email{ziemke@email.sc.edu}

\blfootnote{The article is part of the 
second author's Ph.D. thesis which is prepared at the University of South Carolina under the supervision of the first author.}

%
%
\begin{spacing}{1.4}
\maketitle
\begin{abstract}
Quantum Markov Semigroups (QMSs) originally arose in the study of the evolutions of irreversible open quantum systems.  Mathematically, they are a generalization of classical Markov semigroups where the underlying function space is replaced by a non-commutative operator algebra.  In the case when the QMS is uniformly continuous, theorems due to Lindblad \cite{lindblad}, Stinespring \cite{stinespring}, and Kraus \cite{kraus} imply that the generator of the semigroup has the form
$$L(A)=\sum_{n=1}^{\infty}V_n^*AV_n +GA+AG^*$$
where $V_n$ and G are elements of the underlying operator algebra.  In the present paper we investigate the form of the generators of QMSs which are not necessarily uniformly continuous and act on the bounded operators of a Hilbert space.  We prove that the generators of such semigroups have forms that reflect the results of Lindblad and Stinespring.  We also make some progress towards forms reflecting Kraus' result.  Lastly we look at several examples to clarify our findings and verify that some of the unbounded operators we are using have dense domains. 
\end{abstract}
\section{Motivation and Overview of our Results}
In this section we motivate and overview our results while precise definitions appear in section \ref{sn2}.  In the early seventies, R.S. Ingarden and A. Kossakowski (see \cite{ingarden-kossakowski} and \cite{kossakowski}) postulated that the time evolution of a statistically open system, in the Schrodinger picture, be given by a one-parameter semigroup of linear operators acting on the trace-class operators of a separable Hilbert space $\hi$ satisfying certain conditions.  In the Heisenberg picture the situation translates to a one-parameter semigroup $(T_t)_{t \geq 0}$ acting on $\mathcal{B}(\hi)$ (the bounded operators on a Hilbert space $\hi$) where each $T_t$ is positive and $\sigma$-weakly continuous, satisfying $T_t(1)=1$ for all $t \geq 0$, and where the map $t \mapsto T_tA$ is $\sigma$-weakly continuous for each $A \in \mathcal{B}(\hi)$.  
\\In 1976, G. Lindblad \cite{lindblad} added to the formulation the condition that each $T_t$ be completely positive rather than simply positive, a condition which he justified physically.  Results of Stinespring \cite[Theorem 4]{stinespring} and Arveson \cite[Proposition 1.2.2]{arveson2} further justify this condition by proving that if an operator has a commutative domain or target space then positivity and complete positivity are equivalent.  Further, under the assumption that the map $t \mapsto T_t$ is uniformly continuous, the semigroup is called a uniformly continuous QMS, the generator L of the semigroup is bounded, and Lindblad was able to write L in the form $L(A)=\phi(A)+G^*A+AG$ where $\phi$ is completely positive and $G \in \mathcal{B}(\hi)$.
Using an earlier theorem of Stinespring \cite{stinespring} we can then write $\phi$ in the form $\phi(A)=V^* \phi (A)V$ where $V: \hi \rightarrow \ki$ for some Hilbert space $\ki$ and $\pi: \mathcal{B}(\hi) \rightarrow \mathcal{B}(\ki)$ is a normal representation.  Further, a theorem due to Kraus \cite{kraus} lets us write $\pi$ in the form $\pi (A)=\sum_{n=1}^{\infty}W_n^*AW_n$ where $W_n: \ki \rightarrow \hi$ is a bounded linear operator.  When we combine Stinespring's and Kraus' results we are then able to write $\phi$ in the form $\phi(A)=\sum_{n=1}^{\infty}V_n^*AV_n$ where $V_n \in \mathcal{B}(\hi)$.  Lindblad's original result was for QMSs on a hyperfinite factor $\mathfrak{A}$ of $\mathcal{B}(\hi)$ (which includes the case $\mathfrak{A}=\mathcal{B}(\hi)$, see \cite{topping}).  A similar result to Lindblad's was given in that same year by Gorini, Kossakowski, and Sudarshan in \cite{gks} for QMSs on finite dimensional Hilbert spaces and three years later Christensen and Evans proved it for uniformly continuous QMSs on arbitrary von Neumann algebras in \cite{ce}.  A nice exposition of these results is written by Fagnola~\cite{fagnola1}.  Another name for QMSs that appears in the literature is $CP_0$-semigroups \cite{arveson3}.  An important subclass of QMSs that has also attracted a lot of attention is the class of $E_0$-semigroups which was introduced by Powers \cite{powers2}.
\\In this paper we prove analogous results to Lindblad and Stinespring and make some progress towards Kraus for the generator of a QMS acting on $\mathcal{B}(\hi)$ when we no longer assume that the semigroup is uniformly continuous.  In this case, the generator L is no longer bounded and so inevitably, much discussion on domains of operators and the density of such domains is required.  Because of such difficulties we introduce the notion of U-completely positive maps (for a linear subspace U of $\hi$) which is analogous to completely positive maps but is better suited for unbounded operators (see Definition \ref{3.3}).  We are then able to show (see Theorem \ref{3.6}) that if L denotes the generator of a QMS on $\mathcal{B}(\hi)$ then there exists a subspace $W$ of $\hi$, a linear operator $K:W \rightarrow \hi$, and a W-completely positive map $\phi:D(L) \rightarrow S(W)$ (where $D(L)$ denotes the domain of L and $S(W)$ denotes the set of sesquilinear forms on $W \times W$) such that
$$\la u, L(A)v \ra = \phi (A)(u,v)+ \la Ku,Av \ra +\la u, AKv \ra$$ for all $A \in D(L)$ and all $u,v \in W$.
Unfortunately this result does not tell us much about the subspace W or the operator K.  On the other hand, if we restrict ourselves to the domain algebra $\mathcal{A}$ of L, which is the largest $^*$-subalgebra of the domain of L and was studied by Arveson \cite{arveson}, then we are able to find (see Theorem \ref{3.9}) an explicit subspace U of $\hi$ and a linear operator $G:U \rightarrow \hi$ having an explicit formula and a U-completely positive map $\phi : \mathcal{A} \rightarrow S(U)$ such that 
$$\la u, L(A)v \ra = \phi(A)(u,v) + \la u, GA v \ra + \la GA^*u,v \ra$$
for all $A \in \mathcal{A}$ and for all $u,v \in U$ where $\phi : \mathcal{A} \rightarrow S(U)$ is U-completely positive.
\\With regard to Stinespring, we are able to show (see Theorem \ref{3.4}) that there exists a Hilbert space $\ki$, a linear map $V:\hi \rightarrow \ki$, and a unital $^*$-representation $\pi: \mathcal{A} \rightarrow \mathcal{B}(\hi)$ so that $\phi(A)(u,w)= \la Vu, \pi(A)Vw \ra$ for all $u,w \in U$.  Theorems \ref{3.9} and \ref{3.4} are summarized in Corollary \ref{3.7} which is the main result of our paper.  In Section \ref{attemptkraus} we give partial results similar to the one given by Kraus but fall slightly short and discuss a possible way forward (see Proposition \ref{5.4} and the discussion that follows it).  Finally in Section \ref{sn6} we look at three examples to verify the form of their generators and to discuss their corresponding subspace U mentioned above.

\section{Mathematical Background}\label{sn2}
In this section we provide the necessary definitions and mathematical background that is needed for the rest of the paper.  Throughout the paper, $\hi$ will denote a Hilbert space.  To avoid confusion we want to mention from the start that all of our inner products are linear in the second coordinate and conjugate linear in the first.  Also, for $x,y \in \hi$, we define the rank one operator $|x \ra \la y|: \hi \rightarrow \hi$ by $|x \ra \la y|(h)= \la y , h \ra x$.  We will extensively use the $\sigma$ -weak topology so it is worth recalling:  On a general von Neumann algebra, the $\sigma$-weak topology is the $w^*$ topology given by its predual (every von Neumann algebra has a predual).  If the von Neumann algebra under consideration is $\mathcal{B}(\hi)$ then the predual is given by the space of all trace class operators on $\hi$ which we'll denote by $L_1(\hi)$.  For a detailed description of the duality between $\mathcal{B}(\hi)$ and $L_1(\hi)$ we refer the reader to \cite[Theorem 3.4.13]{pedersen}.

\begin{deff}\label{1.0}
Let $\mathfrak{A}$ be a von Neumann algebra and let $M_n$ be the set of all $n \times n$ matrices with complex coefficients.  Then the algebraic tensor product $\mathfrak{A} \otimes M_n$ can be represented as the $^*$-algebra of $n \times n$ matrices with entries in $\mathfrak{A}$.  Every element $A \in \mathfrak{A} \otimes M_n$ can be written in the form
$$A= \sum_{i,j=1}^n A_{ij} \otimes E_{ij}$$
where $E_{ij}$ is the $n\times n$ matrix with 1 in the (i,j)th position and zero everywhere else.  If $\mathfrak{B}$ is also a von Neumann algebra and $T:\mathfrak{A} \rightarrow \mathfrak{B}$ is a linear operator then we define the linear map $T^{(n)}: \mathfrak{A} \otimes M_n \rightarrow \mathfrak{B} \otimes M_n$ by
$$T^{(n)} \left( \sum_{i,j=1}^n A_{ij} \otimes E_{ij} \right)= \sum_{i,j=1}^n T(A_{ij}) \otimes E_{ij}$$
We say a map $T:\mathfrak{A} \rightarrow \mathfrak{B}$ is  $\bf{positive}$ if it maps positive elements to positive elements.  It is called $\bf{completely}$ $\bf{positive}$ if $T^{(n)}$ is positive for all $n \in \N$.  In the case that $\mathfrak{B}$ acts on a Hilbert space $\hi$ it can be proven that T is completely positive if
$$\sum_{i,j=1}^n \la h_i , T(A_i^*A_j)h_j \ra \geq 0$$
for all $n \in \N$, $A_1, \dots , A_n \in \mathfrak{A}$, and $h_1, \dots , h_n \in \hi$ \cite[Proposition 2.9]{fagnola1}.
\end{deff}

\begin{deff}\label{1.1}
Let $\mathfrak{A}$ be a von Neumann algebra.  A $\bf{Quantum}$ $\bf{Dynamical}$ $\bf{Semigroup}$ $\bf{(QDS)}$ is a one-parameter family $(T_t)_{t \geq 0}$ of $\sigma$-weakly continuous, completely positive, linear operators on $\mathfrak{A}$ such that
\\(i)  $T_0=1$
\\(ii)  $T_{t+s}=T_tT_s$
\\(iii)  for a fixed $A \in \mathfrak{A}$, the map $t \mapsto T_t(A)$ is $\sigma$-weakly continuous.
\\Further, if $T_t(1)=1$ for all $t \geq 0$ then we say the quantum dynamical semigroup is $\bf{Markovian}$ or we simply refer to it as a $\bf{Quantum}$ $\bf{Markov}$ $\bf{Semigroup}$ $\bf{(QMS)}$.  If the map  $t \mapsto T_t$ is norm continuous then we say the semigroup is $\bf{uniformly}$ $\bf{continuous}$.
\end{deff}
\noindent$\bf{Note}$:  If $(T_t)_{t \geq 0}$ is a Quantum Markov Semigroup then $\| T_t \|=1$ for all $t \geq 0$.  This is due to \cite[Corollary 1]{dye-russo}.

\begin{deff}\label{1.2}
Given a QDS $(T_t)_{t \geq 0}$, we say that an element $A \in \mathfrak{A}$ belongs to the domain of the infinitesimal generator $L$ of $(T_t))_{t \geq 0}$, denoted by $D(L)$, if
$$ \lim_{t \rightarrow 0} \frac{1}{t} ( T_tA - A)$$
converges in the $\sigma$-weak topology and, in this case, define the $\bf{infinitesimal}$ $\bf{generator}$ to be the generally unbounded operator $L$ such that
$$ L(A)=\sigma \text{-weak-} \lim_{t \rightarrow 0} \frac{1}{t} ( T_tA - A) \quad , \quad A \in D(L) .$$
If $(T_t)_{t \geq 0}$ is uniformly continuous then the generator $L$ is bounded and given by
$$L=\lim_{t \rightarrow 0} \frac{1}{t}(T_t -1)$$
where the limit is taken in the norm topology.
\end{deff}
It has been proven (see \cite[Proposition 3.1.6]{br}) that the domain of the generator L of a QDS is $\sigma$-weakly dense.
However, if the QDS is not uniformly continuous then generator L does not have full domain.  Indeed, it is known (see \cite[Proposition 3.1.6]{br}) that L is $\sigma$-weakly closed so if L has full domain then it would be bounded.  In this case the QDS is then uniformly continuous (see \cite{hille-phillips}).

\section{Generators of Uniformly Continuous Quantum Markov Semigroups on $\mathcal{B}(\hi)$}
In this section we recall some results for the form of the generator of a uniformly continuous QMS (which motivate our work on the consequent sections) and we improve existing results.  As a motivation for Lindblad's result we start by describing a simple example of a QDS and its generator which comes from \cite[Example 3.1]{fagnola1}.

\begin{exm}\label{1.3}
Let $(U_t)_{t \geq 0}$ be a strongly continuous semigroup on a Hilbert space $\hi$.  Then, define $T_t: \mathcal{B}(\hi) \rightarrow \mathcal{B}(\hi) $, for all $t \geq 0$, by
$$T_t(A)=U_tAU_t^*$$
Then $(T_t)_{t \geq 0}$ is a quantum dynamical semigroup.  Further, if $G$ is the generator of $(U_t)_{t \geq 0}$ and $G$ is bounded then the generator, L, of $(T_t)_{t \geq 0}$ is given by
$$L(A)=GA+AG^* .$$
\end{exm}

This form should be compared with \eqref{lindblad} of Theorem \ref{2.2} (Lindblad's result).  In Theorem \ref{2.2} we give a proof of Lindblad for the case of QMSs defined on $\mathcal{B}(\hi)$ which allows for a great deal of possibilities for the operator $G$ in the formula of $L(A)$ which appears in the Abstract.  The following result has been proven in \cite[Lemma 3.13]{fagnola1} for the case of uniformly continuous QDS.  Here we remove the uniform continuity assumption.

\begin{prop}\label{2.1}
If $L$ is the generator of a QDS on $\mathcal{B}(\hi)$ and $\mathfrak{A}$ is a $^*$-subalgebra of $\mathcal{B}(\hi)$ such that $\mathfrak{A} \subseteq D(L)$ then, for all $A_1, \dots , A_n \in \mathfrak{A}$ and $u_1, \dots , u_n \in \hi$ such that $\sum_{k=1}^n A_ku_k = 0$, we have that 

$$\sum_{i,j=1}^n \la u_i , L(A_i^*A_j) u_j \ra \geq 0 .$$
\end{prop}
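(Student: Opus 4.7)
The plan is to exploit complete positivity of each $T_t$ together with the assumption $\sum_k A_k u_k = 0$ to produce a nonnegative function of $t$ that vanishes at $t=0$, and then take the right derivative at $0$ to extract the desired inequality.

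First, I would fix $A_1,\dots,A_n \in \mathfrak{A}$ and $u_1,\dots,u_n \in \hi$ with $\sum_k A_k u_k = 0$, and define
$$F(t) := \sum_{i,j=1}^n \la u_i, T_t(A_i^*A_j) u_j \ra, \qquad t \geq 0.$$
Since $\mathfrak{A}$ is a $^*$-subalgebra of $\mathcal{B}(\hi)$, each $A_i^*A_j$ lies in $\mathcal{B}(\hi)$ and each $T_t$ is completely positive. The equivalent characterization of complete positivity recorded at the end of Definition \ref{1.0} then gives $F(t) \geq 0$ for every $t \geq 0$. On the other hand, using $T_0 = \mathrm{id}$,
$$F(0) = \sum_{i,j=1}^n \la u_i, A_i^* A_j u_j \ra = \Bigl\| \sum_{k=1}^n A_k u_k \Bigr\|^2 = 0.$$

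Next, I would argue that $F$ has a right derivative at $0$ and compute it. Since $\mathfrak{A}$ is a $^*$-subalgebra contained in $D(L)$, each $A_i^*A_j$ belongs to $D(L)$, so by Definition \ref{1.2},
$$\frac{1}{t}\bigl(T_t(A_i^*A_j) - A_i^*A_j\bigr) \xrightarrow[t \to 0^+]{\sigma\text{-weak}} L(A_i^*A_j).$$
The rank-one operator $|u_j\ra\la u_i|$ is trace class and a direct calculation gives $\la u_i, Bu_j\ra = \mathrm{tr}\bigl(B|u_j\ra\la u_i|\bigr)$ for any $B \in \mathcal{B}(\hi)$, so the functional $B \mapsto \la u_i, Bu_j\ra$ is $\sigma$-weakly continuous. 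Consequently
$$\lim_{t \to 0^+} \frac{F(t) - F(0)}{t} = \sum_{i,j=1}^n \la u_i, L(A_i^*A_j) u_j \ra.$$

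Finally, since $F(t) \geq 0 = F(0)$ for all $t \geq 0$, the right-hand side above is a limit of nonnegative quantities and is therefore $\geq 0$, which is precisely the claim. The only mildly delicate point is the passage from $\sigma$-weak convergence to convergence of the matrix elements $\la u_i,\cdot\, u_j\ra$, which I expect to be the main (minor) obstacle; this is handled once we identify each such functional with trace pairing against the trace-class operator $|u_j\ra\la u_i|$ as recalled in Section \ref{sn2}.
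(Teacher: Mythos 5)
Your proof is correct and is essentially the paper's argument in compact form: both exploit that $F(t)=\sum_{i,j}\la u_i, T_t(A_i^*A_j)u_j\ra$ is nonnegative for $t>0$ (by complete positivity), vanishes at $t=0$ because $\sum_k A_ku_k=0$, and converges after division by $t$ via the $\sigma$-weak limit paired against the trace-class operators $|u_j\ra\la u_i|$. The only cosmetic difference is that the paper obtains the inequality $F(t)\geq 0$ by hand, amplifying to $T_t^{(n)}$ and applying a rank-one claim to $A_0=\sum_k A_k\otimes E_{1,k}$, whereas you invoke the characterization quoted after Definition \ref{1.0}; the direction you need (complete positivity implies the inequality) is immediate anyway, since $(A_i^*A_j)_{i,j}$ is a positive element of $\mathcal{B}(\hi)\otimes M_n$ and $T_t^{(n)}$ is positive.
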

\begin{proof}
We start with a $\bf{claim}$:  If $(T_t)_{t \geq 0}$ is a $\sigma$-weakly continuous semigroup of positive operators and $L$ is the generator then, for any $A \in \mathfrak{A}$ and $u \in \hi$ such that $Au=0$ we have that $ \la u , L(A^*A)u \ra \geq 0$.
\\Indeed, for $u \in \hi$ define $T: \hi \rightarrow \hi$ by $Th= \la u , h \ra u=|u\ra \la u|(h)$.  Clearly $T$ is rank one and hence $T$ is a trace class operator on $\hi$.  Further, if $\varphi_T$ is the image of $T$ in $\mathfrak{A}^*$ under the trace duality then
$$\varphi_T(B)=tr(BT)= \la u , Bu \ra$$
for all $B \in \mathfrak{A}$.  Then, for $A \in \mathfrak{A}$ such that $Au=0$ we have
$$\la u , L(A^*A)u \ra=\varphi_T (L(A^*A)) = \lim_{\epsilon \rightarrow 0} \frac{1}{\epsilon} \varphi_T(T_{\epsilon}(A^*A)-A^*A) .$$
Further,
\begin{align*}
\lim_{\epsilon \rightarrow 0} \frac{1}{\epsilon} \varphi_T(T_{\epsilon}(A^*A)-A^*A) & = \lim_{\epsilon \rightarrow 0} \frac{1}{\epsilon} \left( \la u , T_{\epsilon}(A^*A)u \ra - \la u , A^*A u \ra \right)
\\ & = \lim_{\epsilon \rightarrow 0} \frac{1}{\epsilon} \la u , T_{\epsilon}(A^*A)u \ra && \text{since } Au=0
\\ & \geq 0 && \text{ since } T_{\epsilon} \geq 0
\end{align*}
which completes the proof of the claim.
\\Now, suppose  $A_1, \dots , A_n \in \mathfrak{A}$ and $u_1, \dots , u_n \in \hi$ such that $\sum_{k=1}^n A_ku_k = 0$.  Since $T_t$ is completely positive, $T_t^{(n)}$ is positive.  So, $(T_t^{(n)})_{t \geq 0}$ is a $\sigma$-weakly continuous semigroup of positive operators with generator $L^{(n)}$.  Let $A_0=\sum_{k=1}^n A_k \otimes E_{1,k}$ and let $u_0=(u_1, \dots u_n)^T$ (where T stands for transpose).  Then, by the above claim,
$$ 0 \leq \la u_0 , L^{(n)} (A_0^*A_0) u_0 \ra = \sum_{j,k=1}^n \la u_j , L(A_j^*A_k) u_k \ra$$
which completes the proof.  
\end{proof}

We will now proceed to look at a proof of Lindblad's Theorem for uniformly continuous QMSs on $\mathcal{B}( \hi)$.  Lindblad's original proof was for any hyperfinite factor in $\mathcal{B}(\hi)$.  Our proof was motivated by a proof given in \cite[Theorem 3.14]{fagnola1}, but as stated earlier, gives us more options in defining the operator $G$ in the formula of $L(A)$ which appears in equation \eqref{lindblad} below.  We make use of the greater flexibility of the form of G in Theorem \ref{3.9}.

\begin{thm}[Lindblad]\label{2.2}
 Let $L$ be the generator of a uniformly continuous QMS on $\mathcal{B}(\hi)$.  Let $T$ be any positive finite rank operator on $\hi$.  Then there exists $h \in \hi$ such that if the operator $G$ is defined on $\hi$ by
$$G(x)=L(|x \ra \la Th |)h-\frac{1}{2} \la h, L(T)h \ra x$$
then there exists a completely positive map $\phi : \mathcal{B}(\hi) \rightarrow \mathcal{B}( \hi)$ such that 
\begin{equation}\label{lindblad}
  L(A)= \phi (A) + GA+AG^*
\end{equation}
for all $A \in \mathcal{B}(\hi)$.
\end{thm}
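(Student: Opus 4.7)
The plan is to define $\phi \colon \mathcal{B}(\hi) \to \mathcal{B}(\hi)$ by $\phi(A) := L(A) - GA - AG^*$ and to show that $\phi$ is completely positive; the decomposition \eqref{lindblad} will then follow by construction. Since $L$ is bounded (the QMS is uniformly continuous) and $G$ is a bounded operator coming out of the given formula, $\phi$ is a bounded linear map on $\mathcal{B}(\hi)$. Assuming $T \neq 0$, I would first choose $h \in \hi$ with $Th \neq 0$ and rescale so that $\langle h, Th\rangle = 1$; this normalization makes $|h\rangle\langle Th|\, h = h$ and simplifies all subsequent computations.

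By the criterion in Definition \ref{1.0}, complete positivity of $\phi$ amounts to
\[
\sum_{i,j=1}^{n} \langle h_i,\, \phi(A_i^* A_j) h_j\rangle \;\geq\; 0
\]
for every $A_1,\dots,A_n \in \mathcal{B}(\hi)$ and every $h_1,\dots,h_n \in \hi$. Expanding $\phi$ and using the fact that $L(I) = 0$ (from $T_t(I) = I$ for a QMS), this reduces to
\[
\sum_{i,j=1}^{n} \langle h_i,\, L(A_i^* A_j) h_j\rangle \;\geq\; 2\,\mathrm{Re}\,\Bigl\langle \eta,\, \sum_j A_j G^* h_j\Bigr\rangle, \qquad \eta := \sum_j A_j h_j.
\]

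To establish this, I would invoke Proposition \ref{2.1} on a suitably augmented family. The minimal augmentation by $A_0 := I$ and $h_0 := -\eta$ satisfies $\sum_k A_k h_k = 0$ but only produces the weaker bound $2\,\mathrm{Re}\langle \eta,\, \sum_j L(A_j) h_j\rangle$ on the right. To produce the correct right-hand side, I would additionally append the rank-one operator $A_{-1} := |h\rangle\langle Th|$ with a vector $h_{-1}$ chosen, in terms of the original data, so that the enlarged sum $A_{-1} h_{-1} + h_0 + \eta$ still vanishes (adjusting $h_0$ accordingly). Expanding $\sum_{i,j \geq -1} \langle h_i, L(A_i^* A_j) h_j\rangle \geq 0$ then produces a diagonal contribution of the form $\|h\|^2 \langle h_{-1}, L(|Th\rangle\langle Th|) h_{-1}\rangle$, cross terms involving $L(|A_i^* h\rangle\langle Th|)$ and $L(|h\rangle\langle Th|)$, and the original $L(A_i^* A_j)$-quadratic form. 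The definition $G(x) = L(|x\rangle\langle Th|) h - \tfrac12 \langle h, L(T) h\rangle x$ is tailored precisely so that, with a suitable choice of $h_{-1}$ as a linear function of the original data, these contributions collapse into exactly $2\,\mathrm{Re}\langle \eta,\, \sum_j A_j G^* h_j\rangle$, yielding the desired inequality.

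The main obstacle will be the bookkeeping in this last assembly: verifying that the terms produced by Proposition \ref{2.1} combine, with correct signs and scalars, into the $G$-correction term for term. The scalar $\tfrac12 \langle h, L(T) h\rangle$ must emerge from the diagonal $A_{-1}$-block, and the operator piece $L(|x\rangle\langle Th|) h$ from the cross terms with the original $A_j$'s; the match succeeds precisely because of the exact form of $G$ stated in the theorem. The broad flexibility in the choice of $T$ (any positive finite rank operator, rather than only the rank-one projection used in Fagnola's version) provides enough algebraic room to make this matching work, and this same flexibility will be exploited in Theorem \ref{3.9} to extract the refined generator decomposition on the domain algebra.
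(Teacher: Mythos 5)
Your overall strategy---augment the family so that Proposition \ref{2.1} applies and arrange the extra terms to reproduce the $G$-correction---is the right instinct, but the specific augmentation you describe is structurally broken. The operator you append, $A_{-1}=|h\ra\la Th|$, maps every vector into $\C h$, so $A_{-1}h_{-1}$ can never cancel $\eta=\sum_j A_jh_j$ unless $\eta\in\C h$; and if you instead absorb the remainder into the identity block $A_0=I$ with $h_0\neq 0$, the cross terms $\la h_0,L(A_j)h_j\ra$ involve $L(A_j)$ itself and cannot be rewritten in terms of $G$, so they never cancel. The correct augmentation (the one the paper uses) fixes the \emph{vector} $h_{n+1}=h$ and lets the \emph{operator} carry the data: $A_{n+1}=|v\ra\la k_s|$ with $v=-\eta$, so that $A_{n+1}h_{n+1}=\la k_s,h\ra v=v$ after the normalization $\la h,k_s\ra=1$, and no identity block is needed.

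More fundamentally, even with the corrected one-shot augmentation $A_{n+1}=|{-\eta}\ra\la Th|$, $h_{n+1}=h$, the diagonal block contributes $\|\eta\|^2\la h,L(|Th\ra\la Th|)h\ra$, and assembling $\sum_{i,j}\la h_i,\phi(A_i^*A_j)h_j\ra$ forces the identity $\la h,L(|Th\ra\la Th|)h\ra=\la h,L(T)h\ra$, since that diagonal term must be split in half and matched against the $-\tfrac12\la h,L(T)h\ra$ pieces of $G$ and $G^*$ coming from the cross terms. This identity holds when $T$ is rank one (then $|Th\ra\la Th|=T$ under your normalization $\la h,Th\ra=1$) but fails for $\mathrm{rank}\,T\geq 2$: $|Th\ra\la Th|$ is rank one while $T$ is not, and there is no reason for the two expectations of $L$ to agree. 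So the "algebraic room" you are counting on is not there, and the one-shot matching cannot close. The missing idea is the paper's reduction to the rank-one case: diagonalize $T=\sum_s t_s|k_s\ra\la k_s|$ with $t_s\geq 0$, $\sum_s t_s=1$, choose $h$ with $\la h,k_s\ra=1$ for \emph{every} $s$, prove complete positivity of $\phi_s=L-G_s(\cdot)-(\cdot)G_s^*$ for each $G_s(x)=L(|x\ra\la k_s|)h-\tfrac12\la h,L(|k_s\ra\la k_s|)h\ra x$ separately, and then take the convex combination $\phi=\sum_s t_s\phi_s$, $G=\sum_s t_sG_s$; since $Th=\sum_s t_sk_s$, this averaging reproduces exactly the stated formula for $G$, and complete positivity survives because it is preserved under convex combinations.
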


\begin{proof}
By the spectral theorem for compact self-adjoint operators we have that for any positive finite rank operator $T$ there exist finitely many orthonormal vectors $(k_s')_{s=1}^m$, and positive numbers $(t_s')_{s=1}^m$ such that $T= \sum_{s=1}^mt_s' |k_s' \ra \la k_s'|$.  If we define $t= \sum_{s=1}^m t_s'$, $t_s =t_s'/t$, and $k_s=\sqrt{t}k_s'$ then we can rewrite $T$ as $T=\sum_{s=1}^mt_s |k_s \ra \la k_s|$ where $t_s \geq 0$, $\sum_{s=1}^m t_s=1$, and $\la k_{s_1}, k_{s_2} \ra =0$ if $s_1 \neq s_2$.  Let $h=\sum_{s=1}^m k_s\|k_s \|^{-2} \in \hi$.  Then $\la h, k_s \ra =1$ for all $s=1, \dots , m$.
\\ {\bf Claim}:  For $s=1, \dots, m$ if we define the operator $G_s: \hi \rightarrow \hi$ by
\begin{equation}\label{2.2.1}
G_s(x)=L(|x \ra \la k_s|)h-\frac{1}{2} \la h , L (|k_s \ra \la k_s |)h \ra x
\end{equation}
and $\phi_s:  \mathcal{B} (\hi) \rightarrow \mathcal{B} (\hi)$ by
$$\phi_s(A)=L(A) - G_sA -AG_s^*$$
then $\phi_s$ is completely positive.
\\Once the claim is proved, then the map $\phi: \mathcal{B}(\hi) \rightarrow \mathcal{B}(\hi)$ defined by $\phi = \sum_{s=1}^mt_s \phi_s$ is completely positive since the coefficients $t_s$ are non-negative.  Since $\sum_{s=1}^mt_s=1$, we have that
\begin{equation}\label{2.2.2}
\phi (A)=L(A)- \left( \sum_{s=1}^mt_sG_s \right)A - A  \left( \sum_{s=1}^mt_sG_s^* \right)
\end{equation}
and
$$ \sum_{s=1}^mt_sG_s^* =\left( \sum_{s=1}^mt_sG_s \right)^* .$$
Hence, if we set $G= \sum_{s=1}^mt_sG_s$, \eqref{2.2.2} gives \eqref{lindblad}.
Note that by multiplying \eqref{2.2.1} by $t_s$ and summing up we obtain
\begin{align*}
G(x)=\left( \sum_{s=1}^mt_sG_s \right)(x) & =L \left(|x \ra \la \sum_{s=1}^mt_sk_s| \right)h - \frac{1}{2} \la h , L \left( \sum_{s=1}^m t_s |k_s \ra \la k_s| \right) h \ra x
\\ & = L \left(| x \ra \la Th | \right)h - \frac{1}{2} \la h, L(T)h \ra x .
\end{align*}
Thus it only remains to prove the claim.  Fix $s \in \{ 1, \dots , m \}$.  We vary the technique of \cite[Theorem 3.14]{fagnola1} as follows.  Let $A_1, \dots , A_n \in \mathcal{B}(\hi)$ and $h_1, \dots , h_n \in \hi$.  Let $v=- \sum_{i=1}^nA_ih_i$, $A_{n+1}= |v \ra \la k_s |$ and $h_{n+1}=h$.  Then, since $\la h, k_s \ra=1$,
$$\sum_{i=1}^{n+1}A_ih_i=\sum_{i=1}^nA_ih_i+A_{n+1}h_{n+1}=-v+|v \ra \la k_s|(h)=-v+v=0 .$$
Since $L$ is the generator of a uniformly continuous QMS, by Proposition \ref{2.1},
\begin{align*}
0 & \leq \sum_{i,j=1}^{n+1} \la h_i, L(A_i^*A_j)h_j \ra
\\ & =\sum_{i,j=1}^n \la h_i, L(A_i^*A_j)h_j \ra +\sum_{i=1}^n \la h_i, L(A_i^*A_{n+1})h_{n+1} \ra +\sum_{j=1}^n \la h_{n+1}, L(A_{n+1}^*A_j)h_j \ra 
\\ & + \la h_{n+1}, L (A_{n+1}^*A_{n+1})h_{n+1} \ra .
\end{align*}
Hence,
\begin{align*}
0 & \leq \sum_{i,j=1}^n \la h_i, L(A_i^*A_j)h_j \ra + \sum_{i=1}^n \la h_i, L(|A_i^*(v) \ra \la k_s|)h \ra + \sum_{j=1}^n \la h, L(|k \ra \la A_j^*(v)|)h_j \ra 
\\ & + \| v \|^2 \la h , L(|k_s \ra \la k_s| ) h \ra
\\ & =\sum_{i,j=1}^n \la h_i, L(A_i^*A_j)h_j \ra - \sum_{i,j=1}^n \la h_i, L(|A_i^*A_jh_j \ra \la k_s|)h \ra - \sum_{i,j=1}^n \la h, L(|k_s \ra \la A_j^*A_ih_i|)h_j \ra 
\\ & + \sum_{i,j=1}^n \la A_ih_i,A_jh_j \ra \la h, L(|k_s \ra \la k_s|)h \ra .
\end{align*}
 If we break up the last term into two equal pieces and subtract each from the second and third term of the last expression, then we obtain
\begin{align*}   
0 & \leq \sum_{i,j=1}^n \left[ \la h_i, L(A_i^*A_j)h_j \ra - \left( \la h_i, L(|A_i^*A_jh_j \ra \la k_s|)h \ra - \frac{1}{2}\la h_i,A_i^*A_jh_j \ra \la h, L(|k_s \ra \la k_s|)h \ra \right) \right.
\\ & - \left. \left( \la L(|A_j^*A_ih_i \ra \la k_s| ) h,h_j \ra - \frac{1}{2} \la A_j^*A_ih_i, h_j \ra \la h , L(| k_s \ra \la k_s|) h \ra \right) \right] .
\end{align*}
Define an operator $G_s: \hi \rightarrow \hi$ by $G_s(x)=L(|x \ra \la k_s| ) h - \frac{1}{2} \la h , L (|k_s \ra \la k_s|) (h) \ra x$ to continue
\begin{align*}
0 & \leq \sum_{i,j=1}^n \left( \la h_i, L(A_i^*A_j)h_j \ra - \la h_i,G_sA_i^*A_jh_j \ra - G_sA_j^*A_ih_i,h_j \ra \right)
\\ & = \sum_{i,j=1}^n \la h_i, \left(L(A_i^*A_j)-G_sA_i^*A_j-A_i^*A_jG_s^* \right)h_j \ra = \sum_{i,j=1}^n \la h_i , \phi_s(A_i^*A_j)h_j \ra
\end{align*}
which finishes the proof of the claim and the theorem.  

\end{proof}

\begin{deff}\label{3.8}
Let T be a positive finite rank operator in $\mathcal{B}(\hi)$.  Then we will call the vector $h \in \hi$, as defined in Theorem \ref{2.2}, an $\bf{ associate}$ vector for $T$.
\end{deff}

We have casually mentioned the results of Stinespring \cite{stinespring} and Kraus \cite{kraus} earlier.  Since we will attempt to generalize both, we feel it is necessary to give complete statements of them.

\begin{thm}[Stinespring]\label{2.3}
Let $\mathfrak{B}$ be a $C^*$-subalgebra of the algebra of all bounded operators on a Hilbert space $\hi$ and let $\mathfrak{A}$ be a $C^*$-algebra with unit.  A linear map $T:\mathfrak{A} \rightarrow \mathfrak{B}$ is completely positive if and only if it has the form
\begin{equation}\label{stinespring}
T(A)=V^* \pi (A) V
\end{equation}
where $(\pi, \ki )$ is a unital $^*$-representation of $\mathfrak{A}$ on some Hilbert space $\ki$, and V is a bounded operator from $\hi$ to $\ki$.
\end{thm}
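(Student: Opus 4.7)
The plan is to carry out a GNS-style construction on the algebraic tensor product $\mathfrak{A} \otimes \hi$. First I would define a sesquilinear form on the vector space of finite formal sums $\sum A_i \otimes x_i$ by setting
\[ \Bigl\langle \sum_i A_i \otimes x_i, \sum_j B_j \otimes y_j \Bigr\rangle_0 = \sum_{i,j} \la x_i, T(A_i^*B_j) y_j \ra. \]
Complete positivity of $T$ (in the form given in Definition \ref{1.0}) is exactly the statement that this form is positive semidefinite. Let $\mathcal{N}$ be the null space $\{\xi : \la \xi,\xi\ra_0 = 0\}$, which by Cauchy--Schwarz is a linear subspace, and let $\ki$ be the Hilbert space completion of the quotient $(\mathfrak{A} \otimes \hi)/\mathcal{N}$ with the induced inner product.

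Next I would define the representation $\pi$. For $A \in \mathfrak{A}$, set $\pi_0(A)(B \otimes x) = (AB) \otimes x$ and extend linearly. The key estimate is that $\pi_0(A)$ descends to a bounded operator on the quotient, with $\|\pi_0(A)\| \leq \|A\|$. This follows because $\|A\|^2 \mathbf{1} - A^*A$ is positive in $\mathfrak{A}$, so for $\xi = \sum_i B_i \otimes x_i$,
\[ \la \pi_0(A)\xi, \pi_0(A)\xi \ra_0 = \sum_{i,j} \la x_i, T(B_i^*A^*A B_j) x_j \ra \leq \|A\|^2 \sum_{i,j} \la x_i, T(B_i^*B_j) x_j \ra, \]
where the inequality uses complete positivity applied to the matrix $(B_i^*(\|A\|^2\mathbf{1}-A^*A)B_j)_{i,j} = C^*\mathrm{diag}(\|A\|^2\mathbf{1}-A^*A)C$ with $C$ the row $(B_1,\ldots,B_n)$. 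This shows $\pi_0(A)$ preserves $\mathcal{N}$ and extends to a bounded operator $\pi(A)$ on $\ki$. Straightforward calculations then verify that $\pi$ is multiplicative, $^*$-preserving, and unital.

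Finally I would construct $V$ and check the factorization. Using the unit $\mathbf{1} \in \mathfrak{A}$, define $V: \hi \to \ki$ by $Vx = \mathbf{1} \otimes x + \mathcal{N}$. Boundedness of $V$ follows from $\|Vx\|^2 = \la x, T(\mathbf{1})x\ra \leq \|T(\mathbf{1})\|\,\|x\|^2$. The factorization is then immediate:
\[ \la x, V^*\pi(A)Vy \ra = \la Vx, \pi(A)Vy \ra = \la \mathbf{1} \otimes x, A \otimes y \ra_0 = \la x, T(A) y \ra. \]
The converse direction is the easy part: if $T(A)=V^*\pi(A)V$ with $\pi$ a $^*$-representation, then for any $A_1,\ldots,A_n\in\mathfrak{A}$ and $h_1,\ldots,h_n\in\hi$,
\[ \sum_{i,j} \la h_i, T(A_i^*A_j) h_j \ra = \Bigl\| \sum_j \pi(A_j) V h_j \Bigr\|^2 \geq 0, \]
so $T$ is completely positive. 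The main obstacle is the boundedness argument for $\pi(A)$, which is where complete positivity (rather than mere positivity) is essentially used, via the matrix inequality $\bigl(B_i^*(\|A\|^2\mathbf{1}-A^*A)B_j\bigr)_{i,j} \geq 0$.
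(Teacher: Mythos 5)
Your construction is correct and is the standard GNS-style dilation argument; the paper itself does not reprove Theorem \ref{2.3} (it cites Stinespring's original paper), but your proof coincides with that original argument and with the template the paper follows in its proof of Theorem \ref{3.4}, where the same positive form on $\mathcal{A}\otimes U$, the bound $\| \pi_0(A)x\| \leq \|A\|\,\|x\|$ obtained from $A^*A \leq \|A^*A\|\,1$, and the map $Vu = 1\otimes u + N$ appear. No gaps: in particular your matrix-positivity justification of the key estimate and the easy converse via $\sum_{i,j}\la h_i, T(A_i^*A_j)h_j\ra = \bigl\| \sum_j \pi(A_j)Vh_j \bigr\|^2 \geq 0$ are both sound.
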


\begin{thm}[Kraus]\label{2.4}
Let $\mathfrak{A}$ be a von Neumann algebra of operators on a Hilbert space $\hi$ and let $\ki$ be another Hilbert space.  A linear map $T: \mathfrak{A} \rightarrow \mathcal{B}(\ki)$ is normal and completely positive if and only if it can be represented in the form
\begin{equation}\label{kraus}
T(A)= \sum_{j=1}^{\infty}V_j^*AV_j
\end{equation}
where $(V_j)_{j=1}^{\infty}$ is a sequence of bounded operators from $\ki$ to $\hi$ such that the series $\sum_{j=1}^{\infty}V_j^*AV_j$ converge strongly.
\end{thm}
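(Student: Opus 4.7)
The strategy is to derive Kraus's form by combining Stinespring's theorem (Theorem \ref{2.3}) with the structure theorem for normal $^*$-representations of a von Neumann algebra.

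The ``if'' direction is the easy half. Given $T(A)=\sum_{j=1}^{\infty} V_j^* A V_j$ with strong convergence, complete positivity follows immediately from the inner-product criterion of Definition \ref{1.0} applied term by term (each $A \mapsto V_j^* A V_j$ is completely positive, and pointwise strong limits of completely positive maps remain completely positive). For normality I would test against an arbitrary $\rho \in L_1(\ki)$: trace cyclicity gives
$$\mathrm{tr}(T(A)\rho)=\sum_j \mathrm{tr}(A\,V_j\rho V_j^*)=\mathrm{tr}(A\sigma), \qquad \sigma=\sum_j V_j \rho V_j^* \in L_1(\hi),$$
exhibiting $A \mapsto \mathrm{tr}(T(A)\rho)$ as a $\sigma$-weakly continuous functional on $\mathfrak{A}$.

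For the ``only if'' direction I would first apply Stinespring's theorem to produce a Hilbert space $\li$, a bounded operator $W:\ki \to \li$, and a unital $^*$-representation $\pi:\mathfrak{A} \to \mathcal{B}(\li)$ with $T(A)=W^*\pi(A)W$. After passing to the minimal Stinespring dilation (the closure of $\pi(\mathfrak{A})W(\ki)$), the normality of $T$ transfers to normality of $\pi$: the vector functionals
$$A \mapsto \la \pi(B_1)W\xi_1, \pi(A)\pi(B_2)W\xi_2 \ra = \la W\xi_1, T(B_1^* A B_2)W\xi_2 \ra$$
are $\sigma$-weakly continuous in $A$, and the sesquilinear forms they determine identify $\pi(A)$ on the minimal dilation. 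I would then invoke the structure theorem for normal representations of a von Neumann algebra $\mathfrak{A}\subseteq \mathcal{B}(\hi)$: there exist a Hilbert space $\ki'$ (a multiplicity space) and an isometry $U:\li \to \hi \otimes \ki'$ intertwining $\pi$ with the amplification, i.e. $U\pi(A)=(A\otimes I_{\ki'})U$ for every $A \in \mathfrak{A}$.

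With that in hand, Kraus's form is a coordinate computation. Choose an orthonormal basis $(e_j)$ of $\ki'$ and define $V_j:\ki \to \hi$ as the $j$-th coordinate of $UW$, so that $UW\xi=\sum_j (V_j\xi)\otimes e_j$. Boundedness of $UW$ gives $\sum_j \|V_j\xi\|^2=\|UW\xi\|^2<\infty$ and each $V_j$ is bounded. A direct computation then yields
$$T(A)\xi = W^*\pi(A)W\xi = (UW)^*(A\otimes I_{\ki'})(UW)\xi = \sum_{j=1}^{\infty} V_j^* A V_j \,\xi,$$
with the partial sums converging strongly in $\ki$. The main obstacle is the structure theorem for normal representations of a general von Neumann algebra, which is a substantial input typically proved via the commutant/amplification machinery; I would cite it from a standard reference (e.g.\ Takesaki or Kadison--Ringrose) rather than reprove it. The intermediate step of propagating normality from $T$ to the Stinespring representation is a minor technicality that requires working with the minimal dilation rather than the raw $\li$ coming out of Stinespring's construction.
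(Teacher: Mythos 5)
The paper does not prove Theorem \ref{2.4}; it is stated as background and cited directly from Kraus \cite{kraus}, so there is no internal proof to compare yours against. Your argument is the standard dilation-theoretic proof and is essentially correct: the easy direction is fine (though you should say why $\sigma=\sum_j V_j\rho V_j^*$ is trace class --- for positive $\rho$ the partial traces are bounded by $\|T(1)\|\,\mathrm{tr}(\rho)$ because $\sum_{j\le N}V_j^*V_j\le T(1)$, and a general $\rho$ splits into positive parts), and the hard direction correctly factors through the minimal Stinespring dilation, the transfer of normality from $T$ to $\pi$ via the vector functionals on the total set $\pi(\mathfrak{A})W\ki$ together with uniform boundedness, and the structure theorem identifying a normal representation of $\mathfrak{A}\subseteq\mathcal{B}(\hi)$ with a reduction of an amplification $A\mapsto A\otimes I_{\ki'}$, which it is reasonable to cite rather than reprove. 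Two small points deserve attention. First, to obtain a \emph{countable} family $(V_j)$ as in \eqref{kraus} you need the multiplicity space $\ki'$ to be of countable dimension; this follows when $\hi$ and $\ki$ are separable (normality of $\pi$ and $\sigma$-weak separability of the unit ball of $\mathfrak{A}$ make the minimal dilation space separable), an assumption implicit in the classical statement. Second, the strong convergence of the partial sums should be justified rather than asserted: for $A\ge 0$ the partial sums $\sum_{j\le N}V_j^*AV_j$ form a bounded increasing sequence of positive operators, hence converge strongly to $T(A)$, and the general case follows by writing $A$ as a linear combination of positive elements.
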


\section{Generators of General Quantum Markov Semigroups on $\mathcal{B}(\hi)$}
In this section we prove analogous expressions of \eqref{lindblad} and \eqref{stinespring} for the generator of a general QMS on $\mathcal{B}(\hi)$.  The main result of the section as well as the main result of the paper is Corollary \ref{3.7}.  Heading in this direction, we start with the following:

\begin{thm}\label{3.1}
Let $L$ be the generator of a QMS on $\mathcal{B}(\hi)$.  Then there exists a family $(L_{\epsilon})_{\epsilon >0}$ of generators of uniformly continuous QMSs on $\mathcal{B}(\hi)$ such that
$$L(A)= \lim_{\epsilon \rightarrow 0} L_{\epsilon}(A)$$
for all $A \in D(L)$, where the limit is taken in the $\sigma$-weak topology.  Thus, by Theorem \ref{2.2}, there exists a family $(\phi_{\epsilon})_{\epsilon >0}$ of normal completely positive operators on $\mathcal{B}(\hi)$ and a family $(G_{\epsilon})_{\epsilon >0}$ of bounded operators on $\hi$ such that
$$L_{\epsilon}(A)= \phi_{\epsilon}(A)+G_{\epsilon}A+AG_{\epsilon}^*$$
for all $A \in \mathcal{B}(\hi)$.
\end{thm}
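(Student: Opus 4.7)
The plan is to use a Yosida-type approximation, adapted to the $\sigma$-weak setting. Writing $(T_t)_{t \geq 0}$ for the QMS with generator $L$, for each $\epsilon > 0$ I would define the bounded operator
$$L_{\epsilon} := \frac{1}{\epsilon}(T_{\epsilon} - I)$$
on $\mathcal{B}(\hi)$. Being bounded, $L_\epsilon$ generates the uniformly continuous one-parameter semigroup
$$T^{(\epsilon)}_t := e^{tL_\epsilon} = e^{-t/\epsilon}\sum_{k=0}^{\infty} \frac{(t/\epsilon)^k}{k!}\, T_\epsilon^{k} = e^{-t/\epsilon}\sum_{k=0}^{\infty} \frac{(t/\epsilon)^k}{k!}\, T_{k\epsilon},$$
where the last equality uses the semigroup property of $(T_t)$. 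The second expression exhibits $T^{(\epsilon)}_t$ as a convex combination of the completely positive, unital operators $T_{k\epsilon}$; since $\|T_{k\epsilon}\| = 1$ (by the note following Definition \ref{1.1}), the series converges in operator norm, and from this representation one reads off at once that $T^{(\epsilon)}_t$ is completely positive, satisfies $T^{(\epsilon)}_t(1) = 1$, and is $\sigma$-weakly continuous. Hence $(T^{(\epsilon)}_t)_{t \geq 0}$ is a uniformly continuous QMS on $\mathcal{B}(\hi)$ with generator $L_\epsilon$.

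For the convergence, let $A \in D(L)$. By the very definition of the infinitesimal generator given in Definition \ref{1.2},
$$L_\epsilon(A) = \frac{1}{\epsilon}(T_\epsilon A - A) \longrightarrow L(A) \quad \text{in the } \sigma\text{-weak topology as } \epsilon \to 0,$$
which is precisely the required assertion. For the ``Thus'' clause, I would apply Theorem \ref{2.2} to each $L_\epsilon$ to produce a bounded operator $G_\epsilon$ on $\hi$ and a completely positive map $\phi_\epsilon : \mathcal{B}(\hi) \to \mathcal{B}(\hi)$ satisfying $L_\epsilon(A) = \phi_\epsilon(A) + G_\epsilon A + A G_\epsilon^{*}$ for every $A \in \mathcal{B}(\hi)$. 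Normality of $\phi_\epsilon$ then comes for free: $L_\epsilon$ is a linear combination of $T_\epsilon$ and $I$, both of which are $\sigma$-weakly continuous by Definition \ref{1.1}, and the perturbation $A \mapsto G_\epsilon A + A G_\epsilon^{*}$ is $\sigma$-weakly continuous as a sum of left/right multiplications by bounded operators.

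I anticipate no serious obstacle here, since the entire construction is essentially the bounded-generator Yosida scheme, packaged so that complete positivity and the Markov property are preserved. The only step genuinely requiring care is the verification that $T^{(\epsilon)}_t$ is itself a QMS, and this is immediate once one writes out the convex-combination representation above; everything else -- the $\sigma$-weak convergence $L_\epsilon(A) \to L(A)$, the applicability of Theorem \ref{2.2}, and the normality of $\phi_\epsilon$ -- follows directly from the definition of a QMS together with standard properties of bounded generators.
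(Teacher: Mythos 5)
Your proposal is correct, but it takes a genuinely different route from the paper. You use the difference-quotient approximants $L_{\epsilon}=\frac{1}{\epsilon}(T_{\epsilon}-I)$, whereas the paper uses the resolvent (Yosida) approximants $L_{\epsilon}=L(1-\epsilon L)^{-1}$. With your choice, the verification that $e^{tL_{\epsilon}}$ is a uniformly continuous QMS is essentially immediate from the Poisson-weighted series $e^{-t/\epsilon}\sum_{k}\frac{(t/\epsilon)^k}{k!}T_{k\epsilon}$: complete positivity, unitality, and normality all pass to norm-convergent sums of the normal CP unital maps $T_{k\epsilon}$ (the only point you gloss is that normality survives norm limits, which holds because the preadjointable maps form a norm-closed subspace; worth one sentence, but not a gap). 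Likewise the convergence $L_{\epsilon}(A)\to L(A)$ $\sigma$-weakly on $D(L)$ is literally the definition of the generator, so you bypass the paper's Laplace-transform representation of $(1-\epsilon L)^{-1}$ and the accompanying $\epsilon$--$\delta$ estimate, as well as its use of the Dye--Russo norm-attainment criterion and the matrix-amplification identity $(1-\epsilon L^{(n)})^{-1}=((1-\epsilon L)^{-1})^{(n)}$ needed to get complete positivity of $e^{tL_{\epsilon}}$. You also supply explicitly the normality of $\phi_{\epsilon}$ asserted in the statement (via normality of $L_{\epsilon}$ and of $A\mapsto G_{\epsilon}A+AG_{\epsilon}^{*}$), which the paper's proof leaves implicit although the same argument is available there since its $L_{\epsilon}$ is also $\sigma$-weakly continuous. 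What the paper's resolvent construction buys in exchange is the uniform bound $\|(1-\epsilon L)^{-1}\|\le 1$ and, as exploited in the remark immediately following the theorem, norm convergence $L_{\epsilon}(A)\to L(A)$ for $A\in D(L^2)$; your approximants do not obviously enjoy that stronger convergence, but it is not part of the statement being proved.
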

\begin{proof}
Let $L$ be the generator for a Quantum Markov Semigroup $(U_t)_{t \geq 0}$.  Let $L_{\epsilon}=L(1- \epsilon L)^{-1}$.  Then, for $\epsilon >0$, $L_{\epsilon}$ is bounded and $\sigma$-weakly continuous, since by Proposition 3.1.4 and Proposition~3.1.6 of \cite{br}, $(1- \epsilon L)^{-1}$ is bounded and $\sigma$-weakly continuous and
\begin{equation}\label{3.1.1}
L(1-\epsilon L)^{-1}=-\frac{1}{\epsilon} \left(1-(1- \epsilon L)^{-1} \right) .
\end{equation}
Define $U_{t, \epsilon}:\mathcal{B} (\hi) \rightarrow \mathcal{B} ( \hi)$ by $U_{t, \epsilon}= \exp{(tL_{\epsilon})}$.  Then we know $(U_{t, \epsilon})_{t \geq 0}$ is a uniformly continuous semigroup.  Further, we claim that $(U_{t, \epsilon})_{t \geq 0}$ is contractive.  Indeed, by \cite[Theorem 3.1.10]{br} we have that $\| (1- \epsilon L)^{-1} \| \leq 1$ for all $\epsilon >0$, so
\begin{align*}
\| U_{t, \epsilon} \| & = \|e^{tL_{\epsilon}} \| \leq e^{-t/ \epsilon} \sum_{n=0}^{\infty} \frac{ (t/ \epsilon)^n}{n!} \| (1- \epsilon L)^{-n} \| && \text{by \eqref{3.1.1}}
\\ & \leq e^{-t/ \epsilon} \sum_{n=0}^{\infty} \frac{ (t/ \epsilon)^n}{n!} = 1
\end{align*}
and so $(U_{t, \epsilon})_{t \geq 0}$ is contractive.  Further, since $L_{\epsilon}$ is $\sigma$-weakly continuous we have, by \cite[Proposition 3.9]{fagnola1}, that $U_{t, \epsilon}$ is $\sigma$-weakly continuous.  Also, since $(U_t)_{t \geq 0}$ is Markovian, $1 \in D(L)$ and $L(1)=0$ so
$$L_{\epsilon}(1)=L(1- \epsilon L)^{-1}(1)=(1- \epsilon L)^{-1}L(1)=0 .$$
Hence
$$U_{t, \epsilon}(1) = 1 + \sum_{n=1}^{\infty} \frac{t^n}{n!}L_{\epsilon}^n(1) =1 .$$ 
So, $\| U_{t, \epsilon} \|=1$ and the norm is attained at 1 so, by \cite[Corollary 1]{dye-russo}, $U_{t, \epsilon}$ is positive.  Now $(U_t^{(n)})_{t \geq 0}$ is also a Quantum Markov Semigroup with generator $L^{(n)}$ so, following the above with $U_t^{(n)}$ in place of $U_t$ and $L^{(n)}$ in place of $L$ we get that $\exp{(tL^{(n)}(1-\epsilon L^{(n)})^{-1})} \geq 0$ for all $n \in \N$.  We now claim that $L^{(n)}(1-\epsilon L^{(n)})^{-1}=(L(1-\epsilon L)^{-1})^{(n)}$ which will prove that $U_{t,\epsilon}$ is completely positive, since $(L(1-\epsilon L)^{-1})^{(n)}$ is the generator of the semigroup $(U_{t,\epsilon}^{(n)})_{t \geq 0}$.  Indeed, for $[A_{i,j}]_{i,j=1, \dots n} \in D(L) \otimes M_n( \C)$,
$$(1- \epsilon L^{(n)})([A_{i,j}]_{i,j=1, \dots n})=[(1-\epsilon L)(A_{i,j})]_{i,j=1, \dots n}$$
hence
$$((1-\epsilon L)^{-1})^{(n)}(1- \epsilon L^{(n)})([A_{i,j}]_{i,j=1, \dots n})=[(1- \epsilon L)^{-1}(1-\epsilon L)(A_{i,j})]_{i,j=1, \dots n}=[A_{i,j}]_{i,j=1, \dots n} ,$$
which proves that  $(1-\epsilon L^{(n)})^{-1}=((1-\epsilon L)^{-1})^{(n)}$.  Hence,
$$L^{(n)}(1-\epsilon L^{(n)})^{-1}=L^{(n)}((1-\epsilon L)^{-1})^{(n)}=(L(1-\epsilon L)^{-1})^{(n)} .$$
Therefore $U_{t, \epsilon}$ is completely positive for all $t \geq 0$ and $\epsilon >0$.  Then, by Theorem \ref{2.2}, there exists a completely positive map $\phi_{\epsilon}$ and $G_{\epsilon} \in \mathcal{B}(\hi)$ such that
$$L_{\epsilon}(A)= \phi_{\epsilon}(A)+G_{\epsilon}A + AG_{\epsilon}^*$$
for all $A \in \mathcal{B}(\hi)$.  Next, we claim that $L_{\epsilon}(A) \underset{\epsilon \to 0}{\longrightarrow} L(A)$ in the $\sigma$-weak topology for all $A \in D(L)$.  Let $A \in \mathcal{B}(\hi)$.  First, we want to show $(1-\epsilon L)^{-1}(A) \underset{\epsilon \to 0}{\longrightarrow} A$ $\sigma$-weakly so let $\eta$ be an element of the predual $L_1( \hi)$ of $\mathcal{B}(\hi)$ and $\gamma >0$.  Since $U_t(A) \underset{t \to 0}{\longrightarrow} A$ $\sigma$-weakly, choose $\delta>0$ so that for any $t < \delta$ we have $| \eta (U_t(A)-A)| < \gamma / 2$.  Hence 
$$\int_{0}^{\delta} \epsilon^{-1}e^{-t/ \epsilon} \left| \eta (U_t(A)-A) \right| dt <\frac{\gamma}{2} .$$
Then,
\begin{align*}
\left| \eta \left( (1- \epsilon L)^{-1}(A) \right) -\eta \left( A \right) \right| & =  \left| \eta \left(\epsilon^{-1} (\epsilon^{-1}- L)^{-1}(A) \right) -\eta \left( A \right) \right|
\\ & = \left| \int_0^{\infty} \epsilon^{-1}e^{-t/ \epsilon} \eta (U_t(A)) dt - \eta (A) \right| \quad \text{ by \cite[Prop.3.1.6]{br}}
\\ & \leq \int_{\delta}^{\infty} \epsilon^{-1}e^{-t/ \epsilon} \left| \eta (U_t(A)-A) \right| dt + \int_{0}^{\delta} \epsilon^{-1}e^{-t/ \epsilon} \left| \eta (U_t(A)-A) \right| dt
\\ & \leq 2 \| \eta \| \| A \| \int_{\delta}^{\infty} \epsilon^{-1}e^{-t/ \epsilon}dt + \frac{\gamma}{2}
\\ & = 2 \| \eta \| \|A \|e^{-\delta / \epsilon}+ \frac{\gamma}{2} .
\end{align*}
So pick $\epsilon_0 > 0 $ so that for all $0 < \epsilon < \epsilon_0$ we have $e^{-\delta / \epsilon} < \gamma (4\| \eta \| \| A \| )^{-1}$.  Then we have that $\left| \eta \left( (1- \epsilon L)^{-1}(A) \right) -\eta \left( A \right) \right|< \gamma$ and therefore $(1-\epsilon L)^{-1}(A) \underset{\epsilon \to 0}{\longrightarrow} A$ $\sigma$-weakly for all $A \in \mathcal{B}(\hi)$.  So, for $A \in D(L)$, replace A with $LA$ and we then have $L(1- \epsilon L)^{-1}A \underset{\epsilon \to 0}{\longrightarrow} LA$ $\sigma$-weakly since $L(1- \epsilon L)^{-1}A=(1- \epsilon L)^{-1}LA$ for any $A \in D(L)$.  Hence $L_{\epsilon}(A) \underset{\epsilon \to 0}{\longrightarrow} L(A)$ $\sigma$-weakly for all $A \in D(L)$.  Thus

$$L(A)=\sigma \text{-weak-}\lim_{\epsilon \rightarrow 0} \left( \phi_{\epsilon}(A)+G_{\epsilon}A+AG_{\epsilon}^* \right)$$
which completes the proof.  
\end{proof}

In Theorem \ref{3.1}, if $A \in D(L^2)$ we actually get that $L_{\epsilon}(A) \underset{\epsilon \to 0^+}{\longrightarrow} L(A)$ in norm.  Indeed, for $A \in D(L)$,
$$ \| (1- \epsilon L)^{-1}A-A \| = \|((1- \epsilon L)^{-1}-(1- \epsilon L)^{-1}(1- \epsilon L))A \|= \epsilon \| (1- \epsilon L)^{-1}LA \| \leq \epsilon \| LA \|$$
since $\|(1- \epsilon L)^{-1} \| \leq 1$ for every $ \epsilon > 0$ (see \cite[Prop. 3.1.10]{br}).  So, for $A \in D(L)$
$$ \| (1- \epsilon L)^{-1}A - A \| \leq \epsilon \| LA \| \rightarrow 0$$
as $\epsilon \rightarrow 0$.  Hence, if $A \in D(L^2)$ then
$$L_{\epsilon}(A)=L(1- \epsilon L)^{-1}A=(1- \epsilon L)^{-1}LA \underset{\epsilon \to 0}{\longrightarrow} LA .$$

For a general QMS on $\mathcal{B}(\hi)$, we would not expect the completely positive part of the representation of the generator to be bounded.  This leads us to the following definition:

\begin{deff}\label{3.3}
Let $U$ be a subspace of a Hilbert space $\hi$.  A linear map $\phi$ from a linear subspace $\mathcal{A}$ of $\mathcal{B} (\hi)$ to the set of sesquilinear forms on $U \times U$ is $\bf{U-completely}$ $\bf{positive}$ if for any $k \in \N$, any positive operator $A=(A_{i,j})_{i,j =1, \dots , k} \in \mathcal{A} \otimes M_k(\C)$ and for all $u_1, \dots , u_k \in U$ we have that
$$\sum_{i,j=1}^k \phi(A_{i,j})(u_i,u_j) \geq 0 .$$
\end{deff}

We now proceed to give analogous forms to Lindblad's for the generator of a QMS.

\begin{thm}\label{3.6}
Let $L$ be the generator of a QMS on the von Neumann algebra $ \mathcal{B}(\hi)$.  Then there exists a linear (not necessarily closed) subspace $W$ of $\hi$, a $W$-completely positive map $\phi$ from $D(L)$ into the set of sesquilinear forms on $W \times W$, and a linear operator K from W to $\hi$ such that
$$\la u, L(A)v \ra= \phi (A)(u,v)+ \la Ku , Av \ra +\la u, AKv \ra$$
for all $A \in D(L)$ and all $u,v \in W$. 
\end{thm}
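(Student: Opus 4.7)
My plan is to approximate $L$ by the uniformly continuous generators $L_\epsilon = L(1-\epsilon L)^{-1}$ from Theorem \ref{3.1}, apply Theorem \ref{2.2} to each $L_\epsilon$ with a single common choice of positive finite rank operator $T \in D(L)$ together with an associate vector $h$, and pass to the $\sigma$-weak limit as $\epsilon \to 0^+$. The flexibility in the Lindblad operator $G$ afforded by Theorem \ref{2.2} is essential here: one must use the \emph{same} data $(T,h)$ for every $\epsilon$ if the resulting $G_\epsilon$ are to converge.

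Once such a $T\in D(L)$ is fixed, Theorem \ref{3.1} supplies normal completely positive maps $\phi_\epsilon$ on $\mathcal{B}(\hi)$ and bounded operators $G_\epsilon$ with
\begin{equation*}
L_\epsilon(A) = \phi_\epsilon(A) + G_\epsilon A + A G_\epsilon^*, \qquad G_\epsilon(x) = L_\epsilon(|x\ra\la Th|)h - \tfrac12 \la h, L_\epsilon(T)h\ra\, x .
\end{equation*}
I would then set
\begin{equation*}
W = \{ x \in \hi : |x\ra\la Th| \in D(L) \},
\end{equation*}
which is a linear subspace of $\hi$ because $D(L)$ is linear and $x \mapsto |x\ra\la Th|$ is linear in $x$. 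For $x \in W$ the expression $Gx := L(|x\ra\la Th|)h - \tfrac12\la h,L(T)h\ra x$ defines a linear map $G: W \to \hi$, and I would let $K: W \to \hi$ be the operator realizing $G^*$ in the sense required by the theorem statement. The sesquilinear form $\phi(A)$ on $W \times W$ is then defined by the identity
\begin{equation*}
\phi(A)(u,v) = \la u, L(A)v\ra - \la Ku, Av\ra - \la u, AKv\ra, \quad A \in D(L),\ u,v \in W,
\end{equation*}
so that the formula of the theorem holds by construction.

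The substantive step is to verify $W$-complete positivity. Given a positive $(A_{ij}) \in D(L) \otimes M_k(\C)$ and $u_1,\dots,u_k \in W$, complete positivity of $\phi_\epsilon$ yields
\begin{equation*}
0 \leq \sum_{i,j=1}^k \la u_i, \phi_\epsilon(A_{ij}) u_j\ra = \sum_{i,j}\la u_i, L_\epsilon(A_{ij})u_j\ra - \sum_{i,j}\la u_i, G_\epsilon A_{ij} u_j\ra - \sum_{i,j}\la u_i, A_{ij} G_\epsilon^* u_j\ra .
\end{equation*}
Each of $A_{ij}$, $|u_i\ra\la Th|$, and $T$ lies in $D(L)$ by construction, so pairing the $\sigma$-weak convergence $L_\epsilon(B) \to L(B)$ with the rank-one trace class operators $|u_j\ra\la u_i|$, $|A_{ij} u_j\ra\la h|$, and $|h\ra\la h|$ forces each of the three sums to converge to its analogue with $L$ and $K$ in place of $L_\epsilon$ and $G_\epsilon$. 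The limit is therefore $\sum_{i,j}\phi(A_{ij})(u_i,u_j) \geq 0$.

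The main obstacle will be twofold. First, securing a positive finite rank $T$ inside $D(L)$ is not automatic, since the $\sigma$-weak density of $D(L)$ does not produce finite rank members and the natural smoothing $T\mapsto(I-\epsilon_0 L)^{-1}T$ destroys the finite rank property; this is likely addressed either by a direct resolvent construction or by a slight enlargement of the class of admissible $T$ in Theorem \ref{2.2}. Second, realizing $K$ on $W$ as a bona fide $\hi$-valued linear operator rather than merely as a sesquilinear form is delicate because the explicit formula for $G$ does not yield a closed-form expression for $G^*$, and the adjoint must be controlled on a possibly non-dense domain. This is precisely the reason the theorem makes no density or concreteness claim for $W$ or $K$.
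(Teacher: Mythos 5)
Your overall strategy --- approximate $L$ by $L_\epsilon = L(1-\epsilon L)^{-1}$, invoke Theorem \ref{2.2} for each $L_\epsilon$, and pass to the $\sigma$-weak limit --- is the same as the paper's, but your construction of $W$ and $K$ has genuine gaps that the paper's proof is designed precisely to avoid. First, you require a positive finite rank $T \in D(L)$ with associate vector $h$; Theorem \ref{3.6} carries no such hypothesis, such a $T$ need not exist, and you acknowledge this obstacle without resolving it. The paper needs no such $T$: each $L_\epsilon$ is bounded with full domain, so Theorem \ref{2.2} applies with an arbitrary positive finite rank $T$, and the resulting $G_\epsilon$ are then used only through their adjoints.

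Second, and decisively, the terms you must control in the limit are $\la u, G_\epsilon A v\ra + \la u, A G_\epsilon^* v\ra = \la G_\epsilon^* u, Av\ra + \la A^*u, G_\epsilon^* v\ra$ for arbitrary $A \in D(L)$, i.e.\ you need weak convergence of $G_\epsilon^* u$ for $u \in W$. Your choice $W = \{x : |x\ra\la Th| \in D(L)\}$ only controls $G_\epsilon x$ for $x \in W$ (as in Remark \ref{remark3.6}); to convert this into convergence of $\la u, G_\epsilon Av\ra$ you would need $Av \in W$, i.e.\ $|Av\ra\la Th| \in D(L)$, which fails for general $A \in D(L)$ and is exactly why Theorem \ref{3.9} restricts to the domain algebra $\mathcal{A}$, where $A\,|v\ra\la Th| \in \mathcal{A}$ by multiplicativity. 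The same defect appears in your positivity verification, where the sum $\sum_{i,j}\la u_i, L_\epsilon(|A_{ij}u_j\ra\la Th|)h\ra$ requires $|A_{ij}u_j\ra\la Th| \in D(L)$, which your hypotheses do not supply. Finally, setting $K = G^*$ is not available: $W$ need not be dense, so $G^*$ need not exist as an operator, and even formally $\la G^*u, Av\ra$ is not the limit of $\la G_\epsilon^* u, Av\ra$ without further argument. The paper sidesteps all of this by defining $W$ tautologically as the set of $u$ for which $\lim_{\epsilon\to 0}\la h', G_\epsilon^* u\ra$ exists for every $h' \in \hi$, setting $Ku = \text{weak-}\lim_{\epsilon\to 0} G_\epsilon^* u$, and then noting that $\lim_{\epsilon\to 0}\la u, \phi_\epsilon(A)v\ra$ exists by subtraction and defines a $W$-completely positive map $\phi$; the resulting vagueness of $W$ (it could even be very small) is the stated price of the theorem's generality, whereas your concrete $W$ belongs to the setting of Theorem \ref{3.9}.
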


\begin{proof}
By Proposition \ref{3.1} there exists a family $(\phi_{\epsilon})_{\epsilon >0}$ of normal completely positive operators on $\mathcal{B} ( \hi)$ and there exists a family $(G_{\epsilon})_{\epsilon >0} \subseteq \mathcal{B}(\hi)$ such that
$$L(A)=\lim_{\epsilon \rightarrow 0} \left( \phi_{\epsilon}(A)+G_{\epsilon}A+AG_{\epsilon}^* \right)$$
for all $A \in D(L)$ where the limit is taken in the $\sigma$-weak topology.  Define $W \subseteq \hi$ by
$$W = \{ u \in \hi : \lim_{\epsilon \rightarrow 0} \la h, G_{\epsilon}^*u \ra \text{ exists for all } h \in \hi \} .$$
Then define $K$ on W by $Ku=\text{weak-} \lim_{\epsilon \rightarrow 0 } G_{\epsilon}^*u$.  Then, for $A \in D(L)$, 
\begin{align*}
\la u, L(A)v \ra & = \lim_{\epsilon \rightarrow 0} \la u , \left( \phi_{\epsilon}(A)+G_{\epsilon}A + AG_{\epsilon}^* \right) v \ra
\\ & =\lim_{\epsilon \rightarrow 0} \la u, \phi_{\epsilon}(A) v \ra +\la Ku,Av \ra + \la u, AKv \ra
\end{align*}
for all $u,v \in W$.  Further, since $\lim_{\epsilon \rightarrow 0} \la u, \phi_{\epsilon}(A)v \ra$ exists for all $A \in D(L)$ and for all $u,v \in W$, define a linear map $\phi$ from $D(L)$ to the sesquilinear forms on $W \times W$ by
$$\phi(A)(u,v)= \lim_{\epsilon \rightarrow 0} \la u, \phi_{\epsilon}(A)v \ra .$$
Let $A=(A_{i,j})_{i,j= 1, \dots , k} \in D(L) \otimes M_k(\C)$ be a positive operator and let $u_1, \dots , u_k \in W$.  Since $\phi_{\epsilon}$ is completely positive, we have that
$$ \sum_{i,j=1}^k \la u_i, \phi_{\epsilon}(A_{i,j})u_j \ra \geq 0 .$$
Since $\la u ,\phi_{\epsilon}(A)v \ra \underset{\epsilon \to 0}{\longrightarrow} \phi (A)(u,v)$ for all $A \in D(L)$ and $u,v \in W$ we have that
$$\sum_{i,j=1}^n \phi(A_{i,j})(u_i,u_j) \geq 0$$
which proves that $\phi$ is W-completely positive.  
\end{proof}

\begin{rmk}\label{remark3.6}
Assume L is the generator of a QMS on $\mathcal{B}(\hi)$, $T \in D(L)$ is a positive finite rank operator and h is an associate vector for T.  Assume also that $|x \ra \la Th| \in D(L)$, for all $x \in \hi$.  Let the operators $(G_{\epsilon})_{\epsilon >0}$ be defined as in the proof of Theorem \ref{3.6}.  Then $\text{weak-} \lim_{\epsilon \rightarrow 0}G_{\epsilon}(x)$ and $\text{weak-} \lim_{\epsilon \rightarrow 0}G_{\epsilon}^*(x)$ exist for all $x \in \hi$.  Hence the conclusion of Theorem \ref{3.6} is valid with $W= \hi$.  
\end{rmk}

\begin{proof}
For $x \in \hi$, to see that $\text{weak-} \lim_{\epsilon \rightarrow 0}G_{\epsilon}(x)$ exists notice that for all $y \in \hi$
$$ \la y, G_{\epsilon}(x) \ra = \la y , L_{\epsilon}(|x \ra \la Th|)h \ra - \frac{1}{2} \la h , L_{\epsilon}(T)h \ra \la y,x \ra .$$
Since $T, |x \ra \la Th | \in D(L)$, we have by Theorem \ref{3.1} that $L_{\epsilon}(T) \underset{\epsilon \to 0}{\longrightarrow} L(T)$ and $L_{\epsilon}(|x\ra \la Th|) \underset{\epsilon \to 0}{\longrightarrow} L(|x\ra \la Th|)$ $\sigma$-weakly.  Thus $\la h, L_{\epsilon}(T)h \ra \underset{\epsilon \to 0}{\longrightarrow} \la h, L(T) \ra$ and $\la h, L_{\epsilon}(|x \ra \la Th|)h \ra \underset{\epsilon \to 0}{\longrightarrow} \la h, L(|x \ra \la Th|) \ra$.  Hence $G_{\epsilon}(x) \underset{\epsilon \to 0}{\longrightarrow} L(|x \ra \la Th|)h - \frac{1}{2} \la h , L(T)h \ra x$ weakly.  Next, to see that $\text{weak-} \lim_{\epsilon \rightarrow 0}G_{\epsilon}^*(x)$ exists for all $y \in \hi$, notice that for all $x \in \hi$, $\la G_{\epsilon}^*(y) , x \ra = \la y , G_{\epsilon}(x) \ra$.
\end{proof}
Note that Theorem \ref{3.6} does not specify the size of the subspace W, while Remark \ref{remark3.6} guarantees that $W= \hi$ under some rather strong assumptions.  Theorem \ref{3.9} gives a form of the generator similar to that of Theorem \ref{3.6} with the added advantage that the subspace W is replaced by a subspace U which is easy to describe.  The easy form of U enables us to verify that it is dense in $\hi$ in Examples \ref{4.1} and \ref{4.2}.

\begin{deff}\label{domainalgebra}
If L is the generator of a QMS then the $\bf{domain}$ $\bf{algebra}$ of L is the largest $^*$-subalgebra of the domain of L, $D(L)$, and is shown in \cite{arveson} to be given by
$$\mathcal{A}= \{A \in D(L):A^*A, AA^* \in D(L) \} .$$
\end{deff}

\begin{thm}\label{3.9}
Let L be the generator of a QMS on $\mathcal{B}(\hi)$.  Let $D(L)$ denote its domain and $\mathcal{A}$ denote its domain algebra.  Assume there exists a positive finite rank operator T in $D(L)$ and an associate vector h for T such that $|Th\ra \la Th| \in D(L)$.  Let U be the linear subspace of $\hi$ defined by $U= \{ x \in \hi : |x \ra \la Th| \in \mathcal{A} \}$ and let $G:U \rightarrow \hi$ be the linear operator defined by
$$G(u)=L(|u \ra \la Th| )h- \frac{1}{2} \la h , L(T)h \ra u .$$
Then there exists a U-completely positive map $\phi$ from $\mathcal{A}$ to the set of sesquilinear forms on $U \times U$ such that
$$ \la u, L(A)v \ra = \phi (A)(u,v) + \la u, GA v \ra + \la GA^*u,v \ra .$$
for all $A \in \mathcal{A}$ and $u,v \in U$.
\end{thm}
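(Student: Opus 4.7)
The plan is to prove Theorem~\ref{3.9} by a limiting argument based on the bounded approximations of Theorem~\ref{3.1}, with the key point being a careful choice of these approximations matched to the explicit form of $G$ in the statement. Concretely, I would first apply Theorem~\ref{3.1} to $L$ to obtain generators $L_\epsilon$ of uniformly continuous QMSs satisfying $L_\epsilon(A)\underset{\epsilon\to 0}{\longrightarrow} L(A)$ $\sigma$-weakly for all $A\in D(L)$. Then I would invoke Theorem~\ref{2.2} for each $L_\epsilon$, using the \emph{same} positive finite rank operator $T$ and associate vector $h$ supplied by the hypothesis, to produce completely positive maps $\phi_\epsilon\colon\mathcal{B}(\hi)\to\mathcal{B}(\hi)$ and bounded operators
$$G_\epsilon(x) = L_\epsilon(|x\ra\la Th|)h - \tfrac{1}{2}\la h, L_\epsilon(T)h\ra\, x$$
satisfying $L_\epsilon(A)=\phi_\epsilon(A)+G_\epsilon A+AG_\epsilon^*$ on all of $\mathcal{B}(\hi)$.

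Next I would prove the pointwise weak convergence $G_\epsilon u \to Gu$ for every $u\in U$. Since $u\in U$ forces $|u\ra\la Th|\in\mathcal{A}\subseteq D(L)$, and since $T\in D(L)$, Theorem~\ref{3.1} yields $\sigma$-weak convergence of both $L_\epsilon(|u\ra\la Th|)\to L(|u\ra\la Th|)$ and $L_\epsilon(T)\to L(T)$. Testing these against the rank-one trace-class operators $|h\ra\la y|$ and $|h\ra\la h|$ converts the $\sigma$-weak convergence into scalar convergence of $\la y, L_\epsilon(|u\ra\la Th|)h\ra$ and $\la h, L_\epsilon(T)h\ra$. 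Substituting into the formula for $G_\epsilon u$, I conclude $\la y, G_\epsilon u\ra\to\la y, Gu\ra$ for every $y\in\hi$.

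The algebraic structure of $\mathcal{A}$ now supplies the essential closure property: for any $A\in\mathcal{A}$ and $u,v\in U$, the vectors $Av$ and $A^*u$ again lie in $U$, because the rank-one operators $|Av\ra\la Th|=A\,|v\ra\la Th|$ and $|A^*u\ra\la Th|=A^*\,|u\ra\la Th|$ are products of two elements of the $^*$-algebra $\mathcal{A}$. Pairing both sides of $L_\epsilon(A)=\phi_\epsilon(A)+G_\epsilon A+AG_\epsilon^*$ against $u$ and $v\in U$, rewriting $\la u, AG_\epsilon^*v\ra=\la G_\epsilon A^*u, v\ra$, and letting $\epsilon\to 0$, the left-hand side tends to $\la u, L(A)v\ra$ because the functional $B\mapsto\la u, Bv\ra$ is $\sigma$-weakly continuous (given by the trace-class operator $|v\ra\la u|$), while the two $G_\epsilon$ terms tend to $\la u, GAv\ra$ and $\la GA^*u, v\ra$ by the previous step applied to $Av, A^*u\in U$. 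Hence $\lim_{\epsilon\to 0}\la u,\phi_\epsilon(A)v\ra$ exists, and defining $\phi(A)(u,v)$ to be this limit produces a sesquilinear form on $U\times U$ (sesquilinearity passes through the pointwise limit) satisfying the required identity.

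Finally, for $U$-complete positivity, take a positive operator $(A_{i,j})\in\mathcal{A}\otimes M_k(\C)$ and vectors $u_1,\dots,u_k\in U$. Each entry $A_{i,j}$ lies in $\mathcal{A}$, so $\la u_i,\phi_\epsilon(A_{i,j})u_j\ra\to\phi(A_{i,j})(u_i,u_j)$ by the previous paragraph; meanwhile complete positivity of each $\phi_\epsilon$ gives $\sum_{i,j}\la u_i,\phi_\epsilon(A_{i,j})u_j\ra\geq 0$, and preservation of this inequality under the limit yields the conclusion. The subtle point is the compatibility between the explicit formula for $G$ in the statement and the weak limit of $G_\epsilon$; this is precisely why $U$ is defined to force $|u\ra\la Th|\in D(L)$, so that Theorem~\ref{3.1} applies to the very operators appearing in the definition of $G$.
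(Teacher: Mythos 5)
Your proposal is correct and follows essentially the same route as the paper: approximate $L$ by the bounded generators $L_\epsilon$ of Theorem \ref{3.1}, use the explicit $G_\epsilon$ from Theorem \ref{2.2} with the same $T$ and $h$, exploit that $\mathcal{A}$ is a $^*$-algebra so that $|Av\ra\la Th|=A\,|v\ra\la Th|$ and $|A^*u\ra\la Th|=A^*\,|u\ra\la Th|$ lie in $\mathcal{A}\subseteq D(L)$, pass the $G_\epsilon$-terms to the limit weakly, and define $\phi(A)(u,v)$ as the resulting limit of $\la u,\phi_\epsilon(A)v\ra$, with $U$-complete positivity inherited from the $\phi_\epsilon$. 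The only cosmetic difference is that you phrase the convergence as $G_\epsilon w\to Gw$ weakly for $w\in U$ applied to $w=Av,\,A^*u$, whereas the paper computes $\la u,G_\epsilon Av\ra$ directly, which is the same argument.
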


\begin{rmk}\label{remark2}
First, for the sake of clarity we explain the definition of U.  Note that by Definition \ref{domainalgebra}, for $x \in \hi$, $|x \ra \la Th| \in \mathcal{A}$ is equivalent to having the following three conditions hold:  $|x \ra \la Th| \in D(L)$, $(|x \ra \la Th|)^* \circ |x \ra \la Th|= \| x \|^2 | Th \ra \la Th| \in D(L)$, and $|x \ra \la Th| \circ (|x \ra \la Th|)^*= \| Th \|^2 |x \ra \la x| \in D(L)$.  Thus if U contains non-zero vectors then $|Th \ra \la Th| \in D(L)$ and that is why this condition appears explicitly in the statement of Theorem \ref{3.9}.
\end{rmk}

\begin{proof}[Proof of Theorem \ref{3.9}]
By Theorem \ref{3.1} there exists a family $(L_{\epsilon})_{\epsilon >0}$ of generators of uniformly continuous QMSs on $\mathcal{B}(\hi)$ such that $L(A)= \sigma \text{-weak-}\lim_{\epsilon \rightarrow 0^+}L_{\epsilon}(A)$ for every $A \in D(L)$.  Also there exist families of completely positive operators $(\phi_{\epsilon})_{\epsilon >0}$ on $\mathcal{B}(\hi)$ and bounded operators $(G_{\epsilon})_{\epsilon >0}$ on $\hi$ such that
$$L_{\epsilon}(A)= \phi_{\epsilon}(A) +G_{\epsilon}A+AG_{\epsilon}^*$$
for all $A \in D(L)$.  Let $v \in U$ and let $A \in \mathcal{A}$.  Since $\mathcal{A}$ is an algebra, we obtain $|Av \ra \la Th|=A \circ |v \ra \la Th| \in \mathcal{A}$. Then, using the explicit form for $G_{\epsilon}$ from Theorem \ref{2.2}, we have
\begin{equation}\label{3.9.1}
G_{\epsilon}Av=L_{\epsilon}(|Av \ra \la Th |)h - \frac{1}{2} \la h, L_{\epsilon}(T)h \ra Av .
\end{equation}
Since $ |Av \ra \la Th| \in \mathcal{A} \subseteq D(L)$ we obtain by Theorem \ref{3.1} that
$L_{\epsilon}(|Av \ra \la Th|) \underset{\epsilon \to 0}{\longrightarrow} L(|Av \ra \la Th|)$ in the $\sigma$-weak topology.  Thus for any $u \in \hi$ we obtain
\begin{equation}\label{3.9.2}
\la u, L_{\epsilon}( |Av \ra \la Th|)h \ra \underset{\epsilon \to 0}{\longrightarrow}  \la u, L( |Av \ra \la Th|)h \ra .
\end{equation}
Also, by Theorem \ref{3.1}, since $T \in D(L)$ we have that $L_{\epsilon}(T) \underset{\epsilon \to 0}{\longrightarrow} L(T)$ in the $\sigma$-weak topology and hence
\begin{equation}\label{3.9.3}
\la h, L_{\epsilon} (T)h \ra \underset{\epsilon \to 0}{\longrightarrow} \la h , L(T)h \ra .
\end{equation}
Thus, by \eqref{3.9.1}, \eqref{3.9.2}, and \eqref{3.9.3}, for any $u \in \hi$, $v \in U$ and $A \in \mathcal{A}$ we have
\begin{align*}
\la u , G_{\epsilon}Av \ra & = \la u , L_{\epsilon}(|Av \ra \la Th|)h \ra - \frac{1}{2} \la h, L_{\epsilon}(T)h \ra \la u, Av \ra
\\ & \underset{\epsilon \to 0}{\longrightarrow}  \la u , L(|Av \ra \la Th|)h \ra - \frac{1}{2} \la h, L(T)h \ra \la u, Av \ra = \la u , GAv \ra .
\end{align*}
Similarly, for $u \in U$, $v \in \hi$, and $A \in \mathcal{A}$, we have 
$$ \la u, AG_{\epsilon}^*v \ra \underset{\epsilon \to 0}{\longrightarrow} \la GA^*u,v \ra .$$
Thus for $u,v \in U$ and $A \in \mathcal{A}$,
\begin{align*}
\la u, L(A)v \ra & = \lim_{\epsilon \rightarrow 0} \la u , L_{\epsilon}(A)v \ra = \lim_{\epsilon \rightarrow 0} \la u , \left( \phi_{\epsilon}(A)+G_{\epsilon}A + AG_{\epsilon}^* \right) v \ra
\\ & =\lim_{\epsilon \rightarrow 0} \la u, \phi_{\epsilon}(A) v \ra +\la u,GAv \ra + \la GA^*u, v \ra .
\end{align*}
Thus $\lim_{\epsilon \rightarrow 0} \la u, \phi_{\epsilon}(A)v \ra$ exists for all $A \in \mathcal{A}$ and for all $u,v \in U$,  and therefore define
$$\phi(A)(u,v)= \lim_{\epsilon \rightarrow 0} \la u, \phi_{\epsilon}(A)v \ra .$$
Let $A=(A_{i,j})_{i,j= 1, \dots , k} \in \mathcal{A} \otimes M_k(\C)$ be a positive operator and let $u_1, \dots , u_k \in U$.  Since $\phi_{\epsilon}$ is completely positive we have that
$$ \sum_{i,j=1}^k \la u_i, \phi_{\epsilon}(A_{i,j})u_j \ra \geq 0 .$$
Since $\la u ,\phi_{\epsilon}(A)v \ra \rightarrow \phi (A)(u,v)$ for all $A \in \mathcal{A}$ and $u,v \in U$ we have that
$$\sum_{i,j=1}^n \phi(A_{i,j})(u_i,u_j) \geq 0 .$$
Therefore $\phi$ is U-completely positive. 
\end{proof}

While restricting to the domain algebra helps us to understand the subspace U and the operator G, it does come at a cost since the domain of the generator is $\sigma$-weakly dense while there are examples of QMSs whose domain algebras are not very large.  Indeed, in \cite{fagnola2}, F. Fagnola gives an example of a QMS on $\mathcal{B}(L_2(0, \infty), \C )$ where $\mathcal{A}$ is not $\sigma$-weakly dense in $\mathcal{B}(L_2(0, \infty), \C )$.  In Section 6 we will look at several examples where U is dense in $\hi$ and also verify the above form for the generator L.

We will proceed by showing that we have analogous results to that of Stinespring's.  In the next proposition when we say a map $\pi: \mathcal{A} \rightarrow \mathcal{B}( \hi)$, where $\mathcal{A}$ is a (not necessarily closed) unital $^*$-subalgebra of $\mathcal{B}( \hi)$, is a unital $^*$-representation we mean that it is a unital norm-continuous $^*$-homomorphism.

\begin{thm}\label{3.4}
Suppose $\mathcal{A}$ is a unital (not necessarily closed) $^*$-subalgebra of $\mathcal{B} (\hi)$, $U$ is a (not necessarily closed) linear subspace of $\hi$, and $\phi$ is a U-completely positive map from $\mathcal{A}$ to the set of sesquilinear forms on $U \times U$.  Then there exists a Hilbert space $\mathcal{K}$, a unital $^*$-representation $\pi: \mathcal{A} \rightarrow \mathcal{B} (\mathcal{K})$ of norm equal to one, and a linear map $V: U\rightarrow \mathcal{K}$ such that
$$  \phi(A)(u,w)= \la Vu, \pi (A)Vw \ra_{_{\mathcal{K}}}$$
for all $u,w \in U$.
\end{thm}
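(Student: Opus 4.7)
The plan is to carry out the standard Stinespring-GNS dilation, adapted to the setting where $\phi$ takes values in sesquilinear forms on $U \times U$ rather than in a $C^*$-algebra.

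First, I would form the algebraic tensor product $\mathcal{A} \otimes U$ (as complex vector spaces) and equip it with the sesquilinear form determined on simple tensors by $\la B \otimes u, C \otimes v \ra := \phi(B^* C)(u, v)$. This extends to a well-defined form on the whole tensor product by the universal property, since $\phi(B^*C)(u,v)$ is linear in $(C,v)$ and conjugate linear in $(B,u)$. Positive semi-definiteness is the first substantive point: for $\xi = \sum_{i=1}^k B_i \otimes u_i$, let $\mathbf{B} = (B_1, \dots, B_k)$ be the $1 \times k$ row viewed as an element of $\mathcal{B}(\hi^k, \hi)$; then $(B_i^* B_j)_{i,j} = \mathbf{B}^* \mathbf{B}$ is positive in $\mathcal{B}(\hi^k) = M_k(\mathcal{B}(\hi))$, and since its entries lie in $\mathcal{A}$, it is positive in $\mathcal{A} \otimes M_k(\C)$. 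The $U$-complete positivity of $\phi$ then yields $\la \xi, \xi \ra = \sum_{i,j} \phi(B_i^* B_j)(u_i, u_j) \geq 0$.

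Next, I would quotient by the null space $\mathcal{N} = \{ \xi : \la \xi, \xi \ra = 0 \}$ (a subspace by Cauchy-Schwarz) and complete to obtain a Hilbert space $\mathcal{K}$. Define $V : U \to \mathcal{K}$ by $Vu := [1 \otimes u]$, which uses the unit of $\mathcal{A}$ and is clearly linear, and tentatively define $\pi(A)$ on the pre-Hilbert quotient by $\pi(A)[\sum B_i \otimes u_i] := [\sum (AB_i) \otimes u_i]$. Once $\pi(A)$ is shown to be bounded on the quotient, it extends uniquely to $\mathcal{K}$; multiplicativity and unitality are immediate from the definition on representatives, the $^*$-property $\pi(A)^* = \pi(A^*)$ follows from a direct inner-product calculation, and the required identity
$$\phi(A)(u,w) = \phi(1^* \cdot A \cdot 1)(u,w) = \la [1 \otimes u], [A \otimes w] \ra = \la Vu, \pi(A) Vw \ra$$
follows from the very definition of the inner product. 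The norm equality $\|\pi\| = 1$ then follows from $\pi(1) = I_{\mathcal{K}}$ together with the bound $\|\pi(A)\| \leq \|A\|$.

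The main obstacle is establishing the bound $\|\pi(A)\xi\|^2 \leq \|A\|^2 \|\xi\|^2$, which simultaneously shows that $\pi(A)$ descends to the quotient (the zero class is sent to the zero class) and extends continuously to $\mathcal{K}$. The key calculation is
$$\|A\|^2 \|\xi\|^2 - \|\pi(A)\xi\|^2 = \sum_{i,j} \phi\bigl( B_i^* (\|A\|^2 - A^* A) B_j \bigr)(u_i, u_j),$$
where the matrix $\bigl( B_i^* (\|A\|^2 - A^*A) B_j \bigr)_{i,j}$ equals $\mathbf{B}^* (\|A\|^2 - A^*A) \mathbf{B}$ in $\mathcal{B}(\hi^k)$. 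Since $\|A\|^2 - A^*A \geq 0$ in $\mathcal{B}(\hi)$, this matrix is positive in $\mathcal{B}(\hi^k)$, and as its entries lie in $\mathcal{A}$ it is positive in $\mathcal{A} \otimes M_k(\C)$. The $U$-complete positivity of $\phi$ then yields the desired inequality and completes the proof.
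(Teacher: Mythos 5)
Your proposal is correct and follows the same Stinespring/GNS dilation as the paper: the same sesquilinear form on $\mathcal{A}\otimes U$, the same quotient by the null space and completion to $\ki$, the same $Vu=1\otimes u+N$, and the same left-multiplication representation $\pi$, with the identity $\phi(A)(u,w)=\la Vu,\pi(A)Vw\ra_{\ki}$ and $\|\pi\|=1$ verified exactly as in the paper. The one genuine difference is the key estimate $\|\pi_0(A)x\|^2\le\|A\|^2\|x\|^2$: the paper fixes $x$, forms the functional $\omega(B)=(x,\pi_0(B)x)$, proves $\omega(B^*B)\ge 0$, and then invokes $A^*A\le\|A^*A\|1$ together with positivity of $\omega$; you instead apply $U$-complete positivity directly to the operator-positive matrix $\left(B_i^*(\|A\|^2-A^*A)B_j\right)_{i,j}=\mathbf{B}^*(\|A\|^2-A^*A)\mathbf{B}$, whose entries lie in $\mathcal{A}$ because $\mathcal{A}$ is a $^*$-algebra. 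Your variant is slightly more self-contained: since $\mathcal{A}$ is not norm-closed, the positive operator $\|A\|^2 1-A^*A$ need not admit a square root (or any factorization $C^*C$) inside $\mathcal{A}$, so passing from ``$\omega(B^*B)\ge 0$ for $B\in\mathcal{A}$'' to $\omega(A^*A)\le\|A^*A\|\omega(1)$ tacitly requires positivity of $\omega$ on all operator-positive elements of $\mathcal{A}$ --- which is exactly what Definition \ref{3.3} provides via the matrix trick you use explicitly in one step. In short, both arguments rest on the same definition; yours streamlines the boundedness step and avoids the appeal to an intermediate positive functional.
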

\begin{proof}
Define a sesquilinear form $( \cdot , \cdot ): (\mathcal{A} \otimes U ) \times (\mathcal{A} \otimes U) \rightarrow \C$ by
$$(x,y)= \sum_{i,j=1}^n \phi(A_i^*B_j)(u_i,v_j)$$
where $x=\sum_{i=1}^n A_i \otimes u_i$ and $y=\sum_{j=1}^n B_j \otimes v_j$ (since we allow zero entries, we can have the same upper limit n in both sums).  Since $\phi$ is U-completely positive, $(x,x)\geq 0$ for all $x \in \mathcal{A} \otimes U$ so $( \cdot , \cdot )$ is a positive definite sesquilinear form.  For $x \in \mathcal{A} \otimes U$ let $\|x \|_{(\cdot , \cdot)}=\sqrt{(x,x)}$.  Let $N= \{ x \in \mathcal{A} \otimes U: (x,x)=0 \}$.  Since $( \cdot , \cdot )$ is a positive definite sesquilinear form, by the Cauchy-Schwartz inequality, N is a linear subspace of $\mathcal{A} \otimes U$ and we have that the completion of $( \mathcal{A} \otimes U) / N$, which we'll denote by $\mathcal{K}$, is a Hilbert space where the inner product is given by $\la x+N,y+N \ra_{_{\mathcal{K}}}=(x,y)$.  Let $\pi_0:\mathcal{A} \rightarrow L(\mathcal{A} \otimes U)$ (where $L(X)$ denotes the linear (not necessarily bounded) operators from X to X) defined by
$$\pi_0(A) \left( \sum_{i=1}^n A_i \otimes u_i \right) = \sum_{i=1}^n AA_i \otimes u_i .$$
Then, for $A \in \mathcal{A}$, $x=\sum_{i=1}^n A_i \otimes u_i \in \mathcal{A} \otimes U$ and $y=\sum_{j=1}^n B_j \otimes v_j \in \mathcal{A} \otimes U$ we have
$$(x, \pi_0(A)y)=\left( \sum_{i=1}^n A_j \otimes u_j , \sum_{j=1}^n AB_j \otimes v_j \right)=\sum_{i,j=1}^n \phi((A^*A_i)^*B_j)(u_i,v_i) = (\pi_0(A^*)x,y) .$$
Fix $x=\sum_{i=1}^n A_i \otimes u_i \in \mathcal{A} \otimes U$ and define $\omega : \mathcal{A} \rightarrow \C$ by $\omega(A)=(x, \pi_0(A)x)$ for $A \in \mathcal{A}$.  Clearly $\omega$ is linear.  Then for $A \in \mathcal{A}$,
$$\omega(A^*A) = (x, \pi_0(A^*A)x)= \sum_{i,j=1}^n \phi(A_i^*A^*AA_j)(u_i,u_j) = \sum_{i,j=1}^n \phi((AA_i)^*(AA_j))(u_i,u_j) \geq 0$$
since $\phi$ is U-completely positive.  Now, for $A \in \mathcal{A}$, $A^*A \leq \|A^*A \|1$ since $1 \in \mathcal{A}$.  Then, since $\omega$ is positive,
$$\omega(A^*A) \leq \|A^*A \| \omega(1)= \|A^*A \| \|x \|_{_{( \cdot, \cdot)}}^2 .$$
So,
\begin{equation}\label{s1}
  \| \pi_0(A)x \|_{_{( \cdot, \cdot)}}^2 = ( \pi_0(A)x, \pi_0(A)x )=(x, \pi_0(A^*A)x)= \omega (A^*A) \leq \|A^*A \| \|x \|_{_{( \cdot, \cdot)}}^2 .
\end{equation}
Thus in fact $\pi_0(A) \in \mathcal{B}(\mathcal{A} \otimes U )$ (bounded operators from $\mathcal{A} \otimes U$ to $\mathcal{A} \otimes U$) and $\| \pi_0(A) \| \leq \sqrt{\|A^*A \|} = \|A \|$.
Hence, if $(x,x)=0$ then $(\pi_0(A)x , \pi_0(A)x) = 0$ for all $A \in \mathcal{A}$.
Now, define $\pi: \mathcal{A} \rightarrow \mathcal{B} (\mathcal{K})$ by $\pi(A)(x+N)= \pi_0(A)x+N$ which is well-defined since we saw above that $(x,x)=0 \Rightarrow (\pi_0(A)x, \pi_0(A)x)=0$.  It is obvious that $\pi$ is linear, $\pi (1_{\hi})=1_{\ki}$, and for $A,B \in \mathcal{A}$ we have $\pi(A^*)= \pi (A)^*$ and $\pi (AB)= \pi (A) \pi (B)$ as in Stinesprings's proof \cite[Theorem 1]{stinespring}.  Further, let $V:U \rightarrow \mathcal{K}$ where $Vu= 1 \otimes u +N$ for all $u \in U$.  Then, for $u,w \in U$ and $A \in \mathcal{A}$ we have that
$$\la Vu, \pi(A)Vw\ra_{_{\mathcal{K}}}= \la 1 \otimes u +N, A \otimes w +N\ra_{_{\mathcal{K}}}=(1 \otimes u, A \otimes w)=\phi(A)(u,w) .$$
Any representation of a unital $C^*$-algebra into another is known to be bounded and in fact have norm equal to one (obtained at the identity) \cite[Lemma 3.4.2(b)]{sunder}.  The domain algebra $\mathcal{A}$ is not closed so it is not a $C^*$-algebra but we verify here that the representation $\pi$ has norm equal to one.
Indeed, let $A \in \mathcal{A}$ and $x+N \in \ki$.  Then
$$\| \pi (A)(x+N) \|_{\ki} = \| \pi_0 (A)x+N \|_{\ki} = \la \pi_0 (A)x+N, \pi_0 (A)x+N \ra^{1/2} =\| \pi_0(A)x \|_{( \cdot, \cdot)}$$
Further,  by \eqref{s1}, 
$$\| \pi_0(A)x \|_{( \cdot, \cdot)} \leq \|A^*A\|^{1/2} \| x \|_{( \cdot, \cdot)} = \| A \| \|x+N \|_{\ki} .$$
and therefore $\| \pi (A) \| \leq \| A \|$ for all $A \in \mathcal{A}$ and the proof is complete.  
\end{proof}

Theorems \ref{3.9} and \ref{3.4} bring us one step closer to the explicit form of the generator of a QMS.  Our progress is summed up in the following which is the main result of our paper.

\begin{cor}\label{3.7}
Let $L$ be the generator of a QMS on the von Neumann algebra $\mathcal{B} (\hi)$ and let $\mathcal{A}$ be its domain algebra.  Suppose there exists a positive finite rank operator $T \in D(L)$ and an associate vector h for T such that $|Th\ra \la Th| \in D(L)$.  Let U be the linear subspace of $\hi$ defined by $U= \{ x \in \hi: |x \ra \la Th| \in \mathcal{A} \}$.  Then there exists a Hilbert space $\ki$, a unital $^*$-representation $\pi : \mathcal{A} \rightarrow \mathcal{B}(\ki)$, and linear maps $G: U \rightarrow \hi$ and $V:U \rightarrow \ki$ such that
$$\la u, L(A)w \ra = \la Vu, \pi (A)Vw \ra_{_{\mathcal{K}}}+ \la u,GAw \ra + \la GA^*u,w \ra$$
for all $u,w \in U$ and $A \in \mathcal{A}$.
\end{cor}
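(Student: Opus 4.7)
The plan is to combine Theorem \ref{3.9} and Theorem \ref{3.4} in the obvious way. Under the hypotheses of the corollary, Theorem \ref{3.9} applies directly: it produces the subspace $U = \{x \in \hi : |x\ra\la Th| \in \mathcal{A}\}$, the operator $G : U \to \hi$ defined by $G(u) = L(|u\ra\la Th|)h - \tfrac{1}{2}\la h, L(T)h\ra u$, and a $U$-completely positive map $\phi : \mathcal{A} \to S(U)$ such that
\begin{equation*}
\la u, L(A)w\ra = \phi(A)(u,w) + \la u, GAw\ra + \la GA^*u, w\ra
\end{equation*}
for all $A \in \mathcal{A}$ and $u, w \in U$.

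Next, I would verify that the hypotheses of Theorem \ref{3.4} are met: $\mathcal{A}$ must be a unital $^*$-subalgebra of $\mathcal{B}(\hi)$. That $\mathcal{A}$ is a $^*$-subalgebra follows from its definition. Unitality follows since the QMS is Markovian, so $L(1) = 0$ and hence $1 \in D(L)$ with $1 \cdot 1^* = 1 \cdot 1 = 1 \in D(L)$, giving $1 \in \mathcal{A}$. Moreover, $U$ is already a linear subspace of $\hi$ and $\phi$ is $U$-completely positive by Theorem \ref{3.9}. So Theorem \ref{3.4} yields a Hilbert space $\ki$, a unital $^*$-representation $\pi : \mathcal{A} \to \mathcal{B}(\ki)$ (in fact of norm one), and a linear map $V : U \to \ki$ such that
\begin{equation*}
\phi(A)(u, w) = \la Vu, \pi(A)Vw\ra_{\ki}
\end{equation*}
for all $u, w \in U$ and $A \in \mathcal{A}$.

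Substituting this expression for $\phi(A)(u,w)$ into the formula from Theorem \ref{3.9} yields the desired identity, completing the proof.

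Since this corollary is purely a packaging of the two preceding theorems, there is no genuine obstacle; the only thing to be careful about is the verification that $\mathcal{A}$ is unital so that Theorem \ref{3.4} actually applies, and that is immediate from the Markovian condition $T_t(1) = 1$.
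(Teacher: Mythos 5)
Your proposal is correct and matches the paper's proof, which simply combines Theorems \ref{3.9} and \ref{3.4} exactly as you describe; your explicit check that $\mathcal{A}$ is unital (via $1 \in D(L)$ from the Markov property, as already noted in the proof of Theorem \ref{3.1}) is a reasonable detail that the paper leaves implicit.
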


\begin{proof}
Follows immediately from \ref{3.9} and \ref{3.4}.  
\end{proof}

We do not know if the map G that appears in Theorem \ref{3.9} and Corollary \ref{3.7} is closed.  In Proposition \ref{3.10} we define a linear operator $\widehat{G}:U \rightarrow \mathcal{B}(\hi)$ such that $\widehat{G}(x)(h)=G(x)$, for  a positive finite rank operator $T \in D(L)$ and h an associate vector of T, and we study its closability.

\begin{prop}\label{3.10}
Let L be the generator of a QMS on the von Neumann algebra $\mathcal{B} (\hi)$.  Suppose there exists a positive, finite rank operator $T \in D(L)$ and an associate vector h for T such that $|Th \ra \la Th| \in D(L)$.  Let $U= \{ x \in \hi : |x \ra \la Th | \in \mathcal{A} \}$ and define $\widehat{G}: U \rightarrow \mathcal{B}(\hi)$ by
$$\widehat{G}(x)(v)=L( |x \ra \la Th | )v -\frac{1}{2} \la h, L(T)v \ra x .$$
Then $\widehat{G}$ is ($\| \cdot \|$, $\sigma$-weakly)-closable.  Further, if we define $U_0= \{ x \in \hi : |x \ra \la Th| \in D(L) \}$ then $U \subseteq U_0$ and $\widehat{G}$ defined on $U_0$ is  ($\| \cdot \|$, $\sigma$-weakly)-closed.
\end{prop}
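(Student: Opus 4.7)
The plan is to reduce everything to the $\sigma$-weak closedness of $L$ on $\mathcal{B}(\hi)$, which is recalled in the paragraph following Definition \ref{1.2}. The three claims to verify are: (i) $U \subseteq U_0$, (ii) $\widehat{G}$ on $U_0$ is $(\|\cdot\|,\sigma\text{-weakly})$-closed, and (iii) the restriction to $U$ is $(\|\cdot\|,\sigma\text{-weakly})$-closable. The containment $U \subseteq U_0$ is immediate from Definition \ref{domainalgebra}, since $|x\ra\la Th|\in\mathcal{A}$ requires in particular that $|x\ra\la Th|\in D(L)$. Also, for each $x \in U_0$, the expression
$$\widehat{G}(x)(v)=L(|x\ra\la Th|)v-\tfrac{1}{2}\la h,L(T)v\ra x$$
does define a bounded operator on $\hi$, since $L(|x\ra\la Th|)\in\mathcal{B}(\hi)$, the map $v\mapsto\la h,L(T)v\ra$ is a bounded linear functional, and $x$ is fixed.

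For closedness, I would rewrite the definition as the operator identity
$$L(|x\ra\la Th|)=\widehat{G}(x)+R_x,\qquad R_x(v):=\tfrac{1}{2}\la h,L(T)v\ra x,$$
valid for every $x\in U_0$. Suppose $x_n\in U_0$, $x_n\to x$ in $\|\cdot\|$ and $\widehat{G}(x_n)\to Y$ $\sigma$-weakly. Because $R_x$ is a rank-one operator depending linearly on $x$, the norm estimate $\|R_{x_n}-R_x\|\le \tfrac{1}{2}\|L(T)\|\,\|h\|\,\|x_n-x\|$ shows $R_{x_n}\to R_x$ in operator norm, hence $\sigma$-weakly. Adding the two convergences gives $L(|x_n\ra\la Th|)\to Y+R_x$ $\sigma$-weakly. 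On the other hand the rank-one operators satisfy $\||x_n\ra\la Th|-|x\ra\la Th|\|=\|x_n-x\|\,\|Th\|\to 0$, so $|x_n\ra\la Th|\to|x\ra\la Th|$ in norm, and therefore $\sigma$-weakly. The $\sigma$-weak closedness of $L$ (from Bratteli--Robinson, cited in Section~\ref{sn2}) now forces $|x\ra\la Th|\in D(L)$, i.e.\ $x\in U_0$, together with $L(|x\ra\la Th|)=Y+R_x$, which, combined with the operator identity above, yields $\widehat{G}(x)=Y$. This is exactly the closedness statement.

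Closability of $\widehat{G}$ on $U$ is then a soft consequence: the graph of $\widehat{G}|_U$ sits inside the graph of $\widehat{G}|_{U_0}$, which we have just shown to be $(\|\cdot\|,\sigma\text{-weak})$-closed (this product topology on $\hi\times\mathcal{B}(\hi)$ is Hausdorff on bounded sets, which is the relevant regime here since the closure is taken with respect to this joint topology); therefore the closure of the graph of $\widehat{G}|_U$ is still contained in the graph of $\widehat{G}|_{U_0}$, and so defines a bona fide operator. Equivalently, if $x_n\in U$ with $x_n\to 0$ and $\widehat{G}(x_n)\to Y$ $\sigma$-weakly, applying the closedness argument above with $x=0\in U_0$ forces $Y=\widehat{G}(0)=0$.

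The step I expect to demand the most care is the $\sigma$-weak convergence argument that underlies closedness: making sure that $R_{x_n}\to R_x$ in a topology strong enough to be added to a $\sigma$-weakly convergent sequence in $\mathcal{B}(\hi)$, and that the outer rank-one operators $|x_n\ra\la Th|$ converge $\sigma$-weakly to $|x\ra\la Th|$, so that the $\sigma$-weak closedness of $L$ can be invoked cleanly. Everything else is bookkeeping around Definition \ref{domainalgebra}.
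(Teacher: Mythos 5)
Your proposal is correct and takes essentially the same approach as the paper: both arguments reduce the claim to the $\sigma$-weak closedness of $L$ on $D(L)$ after showing that $|x_n\ra\la Th|$ and the rank-one correction term $\tfrac12\la h,L(T)\,\cdot\,\ra x_n$ converge appropriately. The only cosmetic differences are that you obtain these auxiliary convergences from operator-norm estimates (the paper checks $\sigma$-weak convergence directly via trace-class pairings) and that you prove closedness on $U_0$ first and deduce closability on $U$ as a graph-containment corollary, whereas the paper argues closability first and then repeats the argument with $x_n-x$.
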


\begin{proof}
Let $(x_n)_{n \geq 1} \subseteq U$ such that $x_n \rightarrow 0$ in norm and $\widehat{G} (x_n) \rightarrow A \in \mathcal{B}( \hi)$ $\sigma$-weakly. Then, $|x_n\ra \la Th| \in \mathcal{A} \subseteq D(L)$.  We claim that $|x_n \ra \la Th| \underset{\epsilon \to 0}{\longrightarrow} 0$ $\sigma$-weakly.  Indeed, let $(u_k)_{k \geq 1}, (v_k)_{k \geq 1} \subseteq \hi$ such that $\sum_k \|u_k \|^2< \infty$ and $\sum_k \|v_k \|^2< \infty$.  Then
\begin{align*}
\left| \sum_{k=1}^{\infty} \la u_k, |x_n \ra \la Th|v_k \ra \right| & \leq \sum_{k=1}^{\infty} \left| \la Th, v_k \ra \la u_k , x_n \ra \right| 
\\ & \leq \|x_n \| \left(\sum_{k=1}^{\infty} \|Th \|^2 \|v_k \|^2 \right)^{1/2} \left( \sum_{k=1}^{\infty} \|u_k \|^2 \right)^{1/2}
\\ & =c_1 \|x_n \|
\end{align*}
and since $\| x_n \| \rightarrow 0$ we have that $|x_n \ra \la Th | \rightarrow 0$ $\sigma$-weakly.  Similarly, we claim that the sequence of bounded linear operators $v \mapsto \la h , L(T)v \ra x_n$ (simply denoted as $\la h, L(T) \cdot \ra x_n$) converges to 0 $\sigma$-weakly as $n \to \infty$.  Indeed,
$$\left| \sum_{k=1}^{\infty} \la u_k, \la h, L(T)v_k \ra x_n \ra \right| \leq c_2 \| x_n \| \left( \sum_{k=1}^{\infty} \|u_k \|^2 \right)^{1/2}\left( \sum_{k=1}^{\infty} \|v_k \|^2 \right)^{1/2} .$$
Since $\la h , L(T) \cdot \ra x_n \rightarrow 0$ $\sigma$-weakly as $n \to \infty$ and $\widehat{G}(x_n) \to A$ $\sigma$-weakly we have that $L(|x_n \ra \la Th|) \rightarrow A$ $\sigma$-weakly.  Thus, since $L$ is $\sigma$-weakly closed on its domain D(L) \cite[Theorem 3.1.10]{br}, and $|x_n \ra \la Th| \rightarrow 0$ $\sigma$-weakly we have that $A=L(0)=0$ and therefore $\widehat{G}$ is closable.  For the last statement of Proposition \ref{3.10} suppose that $(x_n)_{n \geq 1} \subseteq U_0$ with $x_n \rightarrow x$ in norm and $\widehat{G}(x_n) \rightarrow A \in \mathcal{B}(\hi)$ $\sigma$-weakly.  Repeating the above argument with $x_n-x$ in place of $x_n$ we obtain that $|x_n-x \ra \la Th| \rightarrow 0$ $\sigma$-weakly (hence $|x_n \ra \la Th| \rightarrow |x \ra \la Th|$ $\sigma$-weakly), and that $\la h , L(T) \cdot \ra (x_n-x) \rightarrow 0$ $\sigma$-weakly as $n \to \infty$, hence $ \la h, L(T) \cdot \ra x_n \rightarrow \la h , L(T) \cdot \ra x$ $\sigma$-weakly as $n \rightarrow \infty$.  Since $\widehat{G}(x_n) \rightarrow A$ $\sigma$-weakly, we obtain that
$$L(|x_n \ra \la Th|) \rightarrow A+ \frac{1}{2} \la h , L(T) \cdot \ra x$$
$\sigma$-weakly.  Thus, since L is $\sigma$-weakly closed on its domain $D(L)$, we obtain that $|x \ra \la Th|)=A+ \frac{1}{2} \la h , L(T) \cdot \ra x$, i.e., $\widehat{G}(x)=A$, which proves that $\widehat{G}$ defined on $U_0$ is ($ \| \cdot \|$, $\sigma$-weakly)-closed.
\end{proof}

As mentioned earlier, we will illustrate the form of the generator L and discuss the subspace U in several examples in Section \ref{sn6} but first we would like to attempt obtaining an analogous result to that of Kraus' (Theorem \ref{2.4}).

\section{An Attempt to Extend Kraus' Result}\label{attemptkraus}
Theorems \ref{3.9} and \ref{3.4} describe the form of the generator of a QMS on $\mathcal{B}(\hi)$.  Unfortunately we do not have a result similar to Theorem \ref{2.4} for the form of the representation $\pi: \mathcal{A} \rightarrow \mathcal{B}(\hi)$ which appears in the conclusion of Theorem \ref{3.4}.  For the uniformly continuous QMSs on $\mathcal{B}(\hi)$, $\pi$ turns out to be a normal representation on $\mathcal{B}(\hi)$ and the map $V$ which appears in Theorem~\ref{2.3} turns out to be bounded.

This section is dedicated to proving, under suitable assumptions, continuity properties of the operators V and $\phi$  which appear in Theorem \ref{3.4} in the hopes of obtaining a dilation for $\phi$, similar to Theorem \ref{2.4}.  While we do not achieve this, we get rather close and identify what we see is ultimately needed to finish.  We also have some continuity results which are of interest in their own right.  

\begin{prop}\label{5.1}
Let L be the generator of a QMS on the von Neumann algebra $\mathcal{B} (\hi)$.  Further, suppose there exists a positive, finite rank operator $T \in D(L)$ and an associate vector h for T such that $|Th\ra \la Th| \in D(L)$.  Let $U= \{ x \in \hi : |x \ra \la Th | \in \mathcal{A} \}$ and define $G: U \rightarrow \hi$ by
$$Gx= L( |x \ra \la Th|)h - \frac{1}{2} \la h, L(T)h \ra x$$
and $V:U \rightarrow \ki$ by 
$$Vx=1 \otimes x +N$$
where $\ki$ is the Hilbert space given in Theorem \ref{3.4}.  Also, suppose that 
\begin{equation}\label{*}
\text{there exists } C >0 \text{ such that } \| L(|x \ra \la Th|)h \| \leq C \|x \| \text{ for all } x \in U.
\end{equation}
Then G is bounded on U.  If the map $\phi$ of Theorem \ref{3.4} satisfies the conclusion of Theorem \ref{3.9} then the map V is bounded on U as well.
\end{prop}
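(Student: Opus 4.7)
The plan splits along the two assertions, with the second building on the first. For boundedness of $G$, I would simply inspect the defining formula
\[
Gx = L(|x\ra\la Th|)h - \tfrac{1}{2}\la h, L(T)h\ra x.
\]
The first summand is controlled by $C\|x\|$ by hypothesis \eqref{*}, while the second is a fixed scalar (depending on $T$ and $h$ but not on $x$) multiple of $x$. Adding the two estimates gives $\|Gx\| \leq C'\|x\|$ with $C' := C + \tfrac{1}{2}|\la h, L(T)h\ra|$.

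For the boundedness of $V$, I would first compute $\|Vx\|_\ki^2$ directly from the inner product on $\ki$ constructed in the proof of Theorem \ref{3.4}. Since $Vx = 1 \otimes x + N$, the formula produces $\|Vx\|_\ki^2 = (1\otimes x, 1\otimes x) = \phi(1^*\cdot 1)(x,x) = \phi(1)(x,x)$, so the task reduces to bounding $\phi(1)(x,x)$ by a constant multiple of $\|x\|^2$. The key observation is the Markov normalization $L(1) = 0$, which also ensures $1 \in \mathcal{A}$ (since $1, 1^*1, 11^* \in D(L)$). Applying the formula from Theorem \ref{3.9} (which $\phi$ satisfies by assumption) with $A = 1$ and $u = v = x$ then gives
\[
0 = \la x, L(1)x\ra = \phi(1)(x,x) + \la x, Gx\ra + \la Gx, x\ra,
\]
whence $\phi(1)(x,x) = -2\,\mathrm{Re}\la Gx, x\ra$. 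Cauchy--Schwarz combined with the bound from the first part then produces $\|Vx\|_\ki^2 \leq 2\|Gx\|\|x\| \leq 2C'\|x\|^2$, proving boundedness of $V$.

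There is no real obstacle here; the main conceptual point is that the Markov normalization $L(1) = 0$ is precisely what propagates the bound from $G$ to $V$ through the identity $\phi(1)(x,x) = -2\,\mathrm{Re}\la Gx, x\ra$. The $U$-complete positivity of $\phi$ serves as a built-in consistency check, forcing $\phi(1)(x,x) \geq 0$ in agreement with the squared norm on the left. If anything, the only non-automatic ingredient is the verification that the hypotheses for invoking Theorem \ref{3.9} at $A=1$ are met, but this follows instantly from the Markov property.
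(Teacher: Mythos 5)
Your proposal is correct and follows essentially the same route as the paper: the triangle inequality with hypothesis \eqref{*} for $G$, and then the identity $\|Vx\|_{\ki}^2=\phi(1)(x,x)=-\la x,Gx\ra-\la Gx,x\ra$ obtained from the conclusion of Theorem \ref{3.9} at $A=1$ together with $L(1)=0$, followed by Cauchy--Schwarz. Your extra remark that the Markov property puts $1$ in the domain algebra $\mathcal{A}$, so that the formula may legitimately be evaluated at $A=1$, is a point the paper leaves implicit, but otherwise the two arguments coincide.
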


\begin{proof}
For $x \in U$,
$$
\|Gx \| = \| L(|x \ra \la Th|)h - \frac{1}{2} \la h , L(T)h \ra x \|
 \leq C \| x \| + \frac{1}{2} \left| \la h , L(T)h \ra \right| \|x \|
 \leq C' \|x \|
$$
and so $G$ is bounded on U.  Further, let $x \in U$.  Then
$$ \|Vx \|_{\ki}^2= \| 1 \otimes x +N \|_{\ki}^2=(1 \otimes x, 1 \otimes x)= \phi(1) (x,x)=|\phi(1)(x,x)| .$$
Hence, by the conclusion of Theorem \ref{3.9}, since $L(1)=0$ we get that
$$|\phi(1)(x,x)|= | - \la x, Gx \ra - \la Gx, x \ra | \leq C \|x \|^2+ C \|x \|^2 .$$
Therefore V is also bounded on U.  
\end{proof}

The operator G of Proposition \ref{5.1} is the same as in Theorem \ref{3.9} and Corollary \ref{3.7}.  The operator V of Proposition \ref{5.1} is the same as in Theorem \ref{3.4} and Corollary \ref{3.7}.  Corollary \ref{3.7} and Proposition \ref{5.1} are used in the proof of the next result.
 
 \begin{prop}\label{5.2}
 Let L be the generator of a QMS on the von Neumann algebra $\mathcal{B} (\hi)$ and let $\mathcal{A}$ denote its domain algebra.  Further, suppose there exists a positive, finite rank operator $T \in D(L)$ and an associate vector h for T such that $|Th\ra \la Th| \in D(L)$.  Let $U= \{ x \in \hi : |x \ra \la Th | \in \mathcal{A} \}$.  Assume that \eqref{*} is valid and that $\overline{U}^{\| \cdot \|} = \hi$.  
Then, there exist a linear map $G:U \rightarrow \hi$, a Hilbert space $\ki$, a linear map $V:U \rightarrow \ki$ and a unital $^*$-representation $\pi : \mathcal{A} \rightarrow \mathcal{B}(\hi)$ such that
\begin{equation}\label{5.2.1}
L(A)=V^* \pi (A)V+GA+AG^*
\end{equation}
for all $A \in \mathcal{A}$.  Further, define $\psi: \mathcal{A} \rightarrow \mathcal{B}(\hi)$ by $\psi (A)=GA+AG^*$.  Then $\psi$ is $\sigma$-weakly - $\sigma$-weakly continuous.  Lastly, the map $\varphi : \mathcal{A} \rightarrow \mathcal{B}(\hi)$ defined by $\varphi(A)=V^* \pi (A)V$ is $\sigma$-weakly - $\sigma$-weakly closable.
\end{prop}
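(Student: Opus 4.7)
The plan is to combine Corollary \ref{3.7} with Proposition \ref{5.1} to upgrade $G$ and $V$ to bounded operators defined on all of $\hi$, then read off the operator identity \eqref{5.2.1} by continuity, and finally extract the two continuity claims from standard facts about multiplication on $\mathcal{B}(\hi)$ and the $\sigma$-weak closedness of $L$.

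First I would apply Corollary \ref{3.7} to produce a Hilbert space $\ki$, a unital $^*$-representation $\pi:\mathcal{A}\to\mathcal{B}(\ki)$, and linear maps $G:U\to\hi$ and $V:U\to\ki$ satisfying
$$\la u, L(A)w\ra = \la Vu, \pi(A)Vw\ra_{\ki} + \la u, GAw\ra + \la GA^*u, w\ra$$
for all $u,w\in U$ and $A\in\mathcal{A}$. Hypothesis \eqref{*} together with Proposition \ref{5.1} shows that $G$ and $V$ are bounded on $U$; since $\overline{U}^{\|\cdot\|}=\hi$, both extend uniquely to bounded operators $G\in\mathcal{B}(\hi)$ and $V\in\mathcal{B}(\hi,\ki)$, which I continue to denote by the same symbols. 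Fixing $A\in\mathcal{A}$, both sides of the displayed identity are jointly continuous in $(u,w)\in\hi\times\hi$ after these extensions, so density propagates the identity from $U\times U$ to all of $\hi\times\hi$. Rewriting $\la GA^*u, w\ra=\la u, AG^*w\ra$ and extracting the unique bounded operator on the right yields
$$L(A) = V^*\pi(A)V + GA + AG^*,$$
which is \eqref{5.2.1}.

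For the $\sigma$-weak continuity of $\psi(A)=GA+AG^*$, I would invoke the standard fact that for any $B\in\mathcal{B}(\hi)$ the maps $A\mapsto BA$ and $A\mapsto AB$ on $\mathcal{B}(\hi)$ are $\sigma$-weakly continuous, being the preadjoints of right and left multiplication by $B$ on the predual $L_1(\hi)$. Summing two such maps preserves $\sigma$-weak continuity, so $\psi$ is $\sigma$-weakly--$\sigma$-weakly continuous on $\mathcal{A}$.

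Finally, the closability of $\varphi(A)=V^*\pi(A)V$ follows by writing $\varphi = L - \psi$ on $\mathcal{A}$. Suppose a net $(A_\lambda)\subseteq\mathcal{A}$ satisfies $A_\lambda\to 0$ and $\varphi(A_\lambda)\to B$ both $\sigma$-weakly. Then $\psi(A_\lambda)\to 0$ $\sigma$-weakly by the previous paragraph, and consequently $L(A_\lambda)=\varphi(A_\lambda)+\psi(A_\lambda)\to B$ $\sigma$-weakly. Since $L$ is $\sigma$-weakly closed on $D(L)$ by \cite[Proposition 3.1.6]{br} and each $A_\lambda\in\mathcal{A}\subseteq D(L)$ with $A_\lambda\to 0$, this forces $B=L(0)=0$, which is the required closability. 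I do not foresee a serious obstacle; the one point that requires care is the density extension of the sesquilinear identity in the second paragraph, but this is immediate once the boundedness of $G$, $V$, $\pi(A)$ and $L(A)$ is in place, since then both sides are jointly continuous in $(u,w)$.
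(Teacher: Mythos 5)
Your proposal is correct and follows essentially the same route as the paper: extend $G$ and $V$ by boundedness (Proposition \ref{5.1}) and density of $U$ to obtain the operator identity \eqref{5.2.1}, then deduce closability of $\varphi = L - \psi$ from the $\sigma$-weak continuity of $\psi$ and the $\sigma$-weak closedness of $L$ on $D(L)$. The only cosmetic difference is that you cite the standard fact that left and right multiplication by a fixed bounded operator are $\sigma$-weakly continuous, whereas the paper verifies this directly for $\psi$ by pairing with square-summable sequences $(x_n),(y_n)$; the content is the same.
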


\begin{rmk}
Note that the assumptions of Proposition \ref{5.2} are rather strong since \eqref{5.2.1} implies that L is bounded on $\mathcal{A}$ (but not necessarily on $\mathcal{B}(\hi )$).
\end{rmk}

\begin{proof}[Proof of Prop. \ref{5.2}]
By Corollary \ref{3.7} there exist a linear map $G:U \rightarrow \hi$, a Hilbert space $\ki$, a linear map $V:U \rightarrow \ki$ and a unital $^*$-representation $\pi : \mathcal{A} \rightarrow \mathcal{B}(\hi)$ such that
$$\la x, L(A)y \ra= \la Vx, \pi (A) V y \ra + \la x, GAy \ra + \la GA^*x, y \ra $$
for all $A \in \mathcal{A}$ and $x,y \in U$.  By Proposition \ref{5.1} and the assumption that $\overline{U}^{\| \cdot \|} = \hi$ we see that
$$\la x, L(A)y \ra= \la x, V^* \pi (A) V y \ra + \la x, GAy \ra + \la x, AG^*y \ra$$
for all $A \in \mathcal{A}$ and $x,y \in \hi$.  Thus $L(A)=V^* \pi (A)V+GA+AG^*$ for all $A \in \mathcal{A}$.  Let $(A_{\lambda})_{\lambda} \subseteq \mathcal{B}(\hi)$ be a net such that $A_{\lambda} \underset{\lambda}{\rightarrow} A$ $\sigma$-weakly for some $A \in \mathcal{B}(\hi)$.  Let  $(x_n)_{n \geq 1}, (y_n)_{n \geq 1} \subseteq \hi$ such that  $\sum_{n=1}^{\infty} \|x_n \|^2 < \infty$ and $\sum_{n=1}^{\infty} \|y_n \|^2 < \infty$.  Then
\begin{align*}
\sum_{n=1}^{\infty} \la x_n, \psi(A_{\lambda})y_n \ra & =  \sum_{n=1}^{\infty} \left( \la x_n, GA_{\lambda}y_n \ra + \la x_n, A_{\lambda}G^*y_n \ra \right)
\\ & = \sum_{n=1}^{\infty} \left( \la G^*x_n, A_{\lambda}y_n \ra + \la x_n, A_{\lambda}G^*y_n \ra \right)
\\ & \underset{\lambda}{\rightarrow} \sum_{n=1}^{\infty} \left( \la G^*x_n, Ay_n \ra + \la x_n, AG^*y_n \ra \right)
\end{align*}
since $\sum_{n=1}^{\infty} \|G^*x_n \|^2 < \infty$ and $\sum_{n=1}^{\infty} \|G^*y_n \|^2 < \infty$.  So we have that $\psi$ is $\sigma$-weakly - $\sigma$-weakly continuous.  
\\Next, let $(A_{\lambda})_{\lambda} \subseteq \mathcal{A}$ be a net such that $A_{\lambda} \underset{\lambda}{\rightarrow} 0$ $\sigma$-weakly and $\varphi (A_{\lambda}) \underset{\lambda}{\rightarrow} B$ $\sigma$-weakly, for some $B \in \mathcal{B}(\hi)$, where $\varphi (A)=V^* \pi (A) V$.   Let  $(x_n)_{n \geq 1}, (y_n)_{n \geq 1} \subseteq \hi$ such that  $\sum_{n=1}^{\infty} \|x_n \|^2 < \infty$ and $\sum_{n=1}^{\infty} \|y_n \|^2 < \infty$.  Then,
$$ \sum_{n=1}^{\infty}  \la x_n , L(A_{\lambda})y_n \ra =  \sum_{n=1}^{\infty}\left( \la x_n, V^* \pi (A_{\lambda})Vy_n \ra + \la x_n , \psi (A_{\lambda})y_n \ra \right) \rightarrow  \sum_{n=1}^{\infty}  \la x_n , By_n \ra$$
since $\psi$ is $\sigma$-weakly - $\sigma$-weakly continuous and $\varphi (A_{\lambda}) \rightarrow B$ $\sigma$-weakly.  Then, since L is $\sigma$-weakly-$\sigma$-weakly closed on its domain $D(L)$ and therefore  $\sigma$-weakly-$\sigma$-weakly closable on $\mathcal{A}$ we have that $B=L(0)=0$.  So we have that $\varphi$ is $\sigma$-weakly - $\sigma$-weakly closable.  
\end{proof}

If one assumes \eqref{*} but does not assume that  $\overline{U}^{\| \cdot \|} = \hi$ then the proof of Proposition \ref{5.2} gives the following.
\begin{rmk}
Consider the situation described in Proposition \ref{5.2} without assuming that  $\overline{U}^{\| \cdot \|} = \hi$.  Let
$$F= \left \{ \sum_{n=1}^{\infty} |x_n \ra \la y_n| : (x_n)_{n \geq 1}, (y_n)_{n \geq 1} \subseteq \overline{U}^{\| \cdot \|} \text{ such that } \sum_{n=1}^{\infty} \|x_n \|^2 < \infty \quad , \quad \sum_{n=1}^{\infty} \|y_n \|^2 < \infty \right \}$$
 then $F \subseteq L_1(\hi)$ (the space of trace class operators on $\hi$).  Further, if we define $\psi: \mathcal{A} \rightarrow \mathcal{B}(\hi)$ by $\psi (A)=GA+AG^*$ then under assumption \eqref{*}, $\psi$ is $\sigma (\mathcal{A},F) -\sigma (\mathcal{B}(\hi),F)$ continuous.  
 \end{rmk}
 We did not find an application provided by the above remark.

\begin{deff}\label{5.3}
A pair $(\pi,V)$ satisfying $T(A)=V^* \pi (A)V$ where $\pi$ is a representation on $\mathcal{B}( \ki)$ and $V: \hi \rightarrow \ki$, is called a {\bf Minimal Stinespring Representation} if the set
$$\{ \pi(A)Vu: A \in \mathcal{A}, \quad u \in U \}$$
is total in $\ki$.
\end{deff}

If we look back at Theorem \ref{3.4} to the definitions of $\phi$, $V$, and $\ki$, it is easy to see that our $(\pi,V)$ is a minimal Stinespring representation (as in the proof of the original result of Stinespring \cite[Theorem 1]{stinespring}).  This will be used in the following result.

\begin{prop}\label{5.4}
Let L be the generator of a QMS on the von Neumann algebra $\mathcal{B} (\hi)$ and let $\mathcal{A}$ denote its domain algebra.  Suppose there exists a positive, finite rank operator $T \in D(L)$ and an associate vector h for T such that $|Th\ra \la Th| \in D(L)$.  Let $U= \{ x \in \hi : |x \ra \la Th | \in \mathcal{A} \}$.  Also, suppose that \eqref{*} is valid and that $\overline{U}^{\| \cdot \|} = \hi$.  Then the unital $^*$-representation $\pi : \mathcal{A} \rightarrow \ki$ which appears in the statement of Proposition \ref{5.2}, is $\sigma$-weakly - $\sigma$-weakly closable.
\end{prop}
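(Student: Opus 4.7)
The plan is to exploit the minimality of the Stinespring representation $(\pi, V)$ (noted just after Definition \ref{5.3}) together with the $\sigma$-weak closability of $\varphi = V^* \pi(\cdot) V$ established in Proposition \ref{5.2}. Let $(A_{\lambda})_{\lambda} \subseteq \mathcal{A}$ be a net with $A_{\lambda} \to 0$ $\sigma$-weakly and $\pi(A_{\lambda}) \to B$ $\sigma$-weakly for some $B \in \mathcal{B}(\ki)$; the goal is to show $B=0$. Since $\{\pi(A)Vu : A \in \mathcal{A},\, u \in U\}$ is total in $\ki$, it is enough to verify that $\la \pi(A)Vu, B\, \pi(A')Vw \ra_{\ki} = 0$ for every $A, A' \in \mathcal{A}$ and $u,w \in U$.

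First, fix $A, A' \in \mathcal{A}$. Since $\mathcal{A}$ is a $^*$-subalgebra of $\mathcal{B}(\hi)$, we have $A^* A_{\lambda} A' \in \mathcal{A}$. Left and right multiplication by fixed bounded operators is $\sigma$-weakly continuous on $\mathcal{B}(\hi)$, so $A^* A_{\lambda} A' \to 0$ $\sigma$-weakly. The analogous statement in $\mathcal{B}(\ki)$, applied to $\pi(A_{\lambda}) \to B$, yields
$$\pi(A^* A_{\lambda} A') = \pi(A^*)\,\pi(A_{\lambda})\,\pi(A') \;\longrightarrow\; \pi(A^*)\, B\, \pi(A') \quad \sigma\text{-weakly in } \mathcal{B}(\ki).$$

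Next, by Proposition \ref{5.1} and the hypothesis $\overline{U}^{\|\cdot\|} = \hi$, the map $V$ extends to a bounded operator $V \in \mathcal{B}(\hi, \ki)$, so $V^*$ is bounded as well. Hence $X \mapsto V^* X V$ is $\sigma$-weakly-$\sigma$-weakly continuous from $\mathcal{B}(\ki)$ to $\mathcal{B}(\hi)$, which gives
$$\varphi(A^* A_{\lambda} A') \;=\; V^* \pi(A^* A_{\lambda} A') V \;\longrightarrow\; V^* \pi(A^*) B\, \pi(A') V \quad \sigma\text{-weakly}.$$
Now invoke the closability of $\varphi$ from Proposition \ref{5.2}: since $A^* A_{\lambda} A' \to 0$ $\sigma$-weakly in $\mathcal{A}$ while $\varphi(A^* A_{\lambda} A')$ converges $\sigma$-weakly to $V^* \pi(A^*) B\, \pi(A') V$, this limit must be $0$.

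Finally, for any $u, w \in U$, using $\pi(A^*) = \pi(A)^*$,
$$0 \;=\; \la u, V^* \pi(A^*) B\, \pi(A') V w\ra \;=\; \la \pi(A) V u,\, B\, \pi(A') V w \ra_{\ki},$$
so $B$ vanishes on a total set and therefore $B = 0$. The main obstacle is conceptual rather than computational: one must recognise that $\pi$ itself is only indirectly accessible, so direct closability of $\pi$ must be routed through the closability of $\varphi$ by sandwiching each $A_{\lambda}$ between fixed elements of $\mathcal{A}$ and then using minimality of $(\pi, V)$ to recover information about $B$ from the resulting vanishing sesquilinear form.
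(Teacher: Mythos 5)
Your proof is correct and follows essentially the same route as the paper's: sandwich $A_\lambda$ between fixed elements of $\mathcal{A}$, use boundedness of $V$ (Proposition \ref{5.1} plus density of $U$) and the $\sigma$-weak closability of $\varphi$ from Proposition \ref{5.2} to conclude $V^*\pi(A^*)B\pi(A')V=0$, and then invoke minimality of the Stinespring pair $(\pi,V)$ to get $B=0$. No substantive differences.
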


\begin{proof}
Let $(A_{\lambda})_{\lambda} \subseteq \mathcal{A}$ be a net such that $A_{\lambda} \underset{\lambda}{\longrightarrow} 0$ $\sigma$-weakly and $\pi(A_{\lambda}) \underset{\lambda}{\longrightarrow} B$ $\sigma$-weakly for some $B \in \mathcal{B}(\hi)$.  Let $C,D \in \mathcal{A}$.  Then it is trivial to see that $C^*A_{\lambda}D \underset{\lambda}{\longrightarrow} 0$ $\sigma$-weakly and $\pi(C^*)\pi(A_{\lambda})\pi (D) \underset{\lambda}{\longrightarrow} \pi(C^*)B \pi (D)$ $\sigma$-weakly.  Since, by Proposition \ref{5.1}, V is bounded on $\hi$ we have
$$\varphi(C^*A_{\lambda}D)=V^* \pi ( C^*A_{\lambda}D)V=V^*\pi(C^*)\pi(A_{\lambda})\pi(D)V \underset{\lambda}{\longrightarrow} V^* \pi (C^*)B\pi(D)V$$
$\sigma$-weakly.  Well, $\varphi$ is  $\sigma$-weakly - $\sigma$-weakly closable by Proposition \ref{5.2} and so $V^*\pi(C^*)B\pi(D)V=0$.  Then, for any $x,y \in \hi$
$$ \la \pi(C)Vx,B\pi(D)Vy \ra = \la x, V^* \pi (C^*)B \pi (D)Vy \ra =0 ,$$
and, since $( \pi, V)$ is a minimal representation, $B=0$.  Therefore $\pi$ is closable.  
\end{proof}

In the application of the theorem of Kraus to the generators of uniformly continuous QMSs on $\mathcal{B}(\hi)$, $\pi$ is a  $\sigma$-weakly continuous unital $^*$-representation so, for a cyclic vector $\omega \in \ki$, the map $\mathcal{B}(\hi) \ni A \mapsto \la \omega , \pi (A) \omega \ra$ is positive and $\sigma$-weakly continuous.  Since we have a characterization of such maps, namely positive trace-class operators acting on $\mathcal{B}(\hi)$ via the trace duality, we can conclude this map has the form
$$ \la \omega , \pi (A) \omega \ra = \sum_{n=1}^{\infty} \la x_n , A x_n \ra$$
where $\sum_{n=1}^{\infty} \| x_n \|^2 < \infty$.  Unfortunately, if we replace $\mathcal{B}(\hi)$ with a (not necessarily closed) $^*$-subalgebra $\mathcal{A}$ and we only assume that the unital $^*$-representation $\pi : \mathcal{A} \rightarrow \ki$ is ($\sigma$-weakly, $\sigma$-weakly)-closable (which is guaranteed by Proposition \ref{5.4}) we do not know the form of the map $\mathcal{A} \ni A \mapsto \la \omega, \pi (A) \omega \ra$.  This seems to be the missing ingredient in order to obtain an analogue result of Kraus for general QMS on $\mathcal{B}(\hi)$.

\section{Examples}\label{sn6}

We will now proceed to look at three examples of QMSs where we verify that their generators satisfy the form given by Corollary \ref{3.7}.  We identify the linear maps G, V, the representation $\pi$, the Hilbert space $\ki$ and the linear subspace U of $\hi$ as in Corollary~\ref{3.7}.  Moreover we prove that the subspace U is dense in $\hi$ in the first two examples.

\begin{exm}\label{4.1}
{\bf (Heat Flow \cite{arveson})} Define $P=\frac{1}{i} \frac{d}{dx}$ and $Q$ to be multiplication by $x$ where $P$ and $Q$ act on $L_2(\R)$.  Further, for $A \in \mathcal{B}(L_2(\R))$ define
$$D_P(A)=i(PA-AP)\quad \text{and} \quad D_Q(A)=i(QA-AQ)$$
where $D_P$ and $D_Q$ are unbounded operators on $\mathcal{B}(L_2(\R))$.  Next, define $L:D(L) (\subseteq \mathcal{B}(L_2(\R)) \rightarrow L_2(\R)$ by $L=D_P^2+D_Q^2$.  Then L generates a QMS.
\end{exm}
The fact that L generates a QMS was proved by Arveson in \cite{arveson}.  By expanding L, we have
$$L(A)=2\left(PAP+QAQ \right) - \left( P^2+Q^2 \right)A - A \left( P^2+Q^2 \right)$$
for all $A \in D(L)$.  Note here that this expression is in the form given by Corollary \ref{3.7} with $\ki = \hi \oplus \hi$, $V= \sqrt{2} P \oplus Q$, $\pi (A)= A \oplus A$, and $G=-(P^2+Q^2)$.

Let $e \in L_2( \R)$ of norm one such that $|e \ra \la e| \in D(L)$, say $e(x)=\frac{1}{\sqrt{2 \pi}} \exp{(-x^2/2)}$ for example, let $T=|e \ra \la e|$ and $h=e$ be an associate vector for T.  Since $Th=e$ we have 
$$U= \{ u \in L_2(\R): |u \ra \la e| \in \mathcal{A} \}$$  where $\mathcal{A}$ is the domain algebra of L.  Let
$$U'=\{u \in L_2(\R): u',u'', Qu,Q^2u \in L_2(\R) \} .$$
It is an easy exercise to check that $U' \subseteq U$ and, since the Schwartz class is norm dense in $L_2(\R)$, we have that $U$ is norm dense in $L_2(\R)$. 

\begin{exm}\label{4.2}
{\bf (\cite{parthasarathy}-pg. 258)} Let $(B_t)_{t \geq 0}$ be a standard Brownian motion defined on the probability space $(\Omega , \mathcal{F} , P)$ and define $T_t: \mathcal{B} ( \hi) \rightarrow \mathcal{B} (\hi)$ by
$$T_tA=\mathbb{E} \left[ e^{iB_tV}Ae^{-iB_tV} \right]$$
where $V$ is a self-adjoint operator on $\hi$.  Then $(T_t)_{t \geq 0}$ is a QMS.
\end{exm}
The fact that $T_0=1$ and $T_t(1)=1$ are obvious.  To prove $T_{t+s}=T_tT_s$ start with the identity
$$T_{t+s}A= \mathbb{E} \left[e ^{i(B_{t+s}-B_s)V}e^{iB_sV}Ae^{-iB_sV}e^{-i(B_{t+s}-B_s)V} \right]$$
and use the property of independent increments for Brownian motion to get the desired result.  The remaining properties which qualify $(T_t)_{t \geq 0}$ as a QMS are fairly obvious.
Now, suppose $V$ is bounded.  Let $(t_n)_{n \in \N} \subseteq [0, \infty)$ such that $t_n \rightarrow t$.  Further, for $A \in \mathcal{B} (\hi)$,
\begin{align*}
\| \mathbb{E} \left[ e^{iB_{t_n}V}A e^{-iB_{t_n}V}-e^{iB_tV}Ae^{-iB_tV} \right] \| & \leq \int_{\Omega} \| e^{iB_{t_n}V}A e^{-iB_{t_n}V}-e^{iB_tV}Ae^{-iB_tV} \| dP
\\ & \leq \int_{\Omega} \left( \| e^{i(B_{t_n}-B_t)V}-1 \| + \| e^{i(B_{t_n}-B_t)V}-1 \| \right) dP
\\ & \rightarrow 0
\end{align*}

\noindent by the Bounded Convergence Theorem since $B_{t_n}( \omega) \rightarrow B_t(\omega)$.  So we have that $(T_t)_{t \geq 0}$ is a uniformly continuous QMS.  Next, we claim that $T_tA= \mathbb{E} \left[e^{iB_t(ad V)}A \right]$ where $(ad V)A=VA-AV$ for all $A \in \mathcal{B}(\hi)$.  To this end, it's an exercise to show that
$$ (ad V)^n A= \sum_{k=0}^n (-1)^k \binom{n}{k} V^{n-k}AV^k$$
which gives
$$ \mathbb{E} \left[ e^{iB_t(ad V)}A \right] = T_tA .$$
Further,
$$T_tA=\frac{1}{\sqrt{2} \pi} \sum_{n=0}^{\infty} \int_{\R}e^{-x^2/2} \frac{(ix \sqrt{t})^n}{(2n)!}(ad V)^{2n}A dx .$$
Then, using our knowledge of Gaussian integrals, we'll find that
$$T_tA=e^{-\frac{1}{2}(ad V)^2} A .$$
So the generator L of $(T_t)_{t \geq 0}$ is given by 
$$ L (A) = - \frac{1}{2}(ad V)^2A= \frac{-1}{2} \left( V^2A+AV^2-2VAV \right) .$$

Now, if $V$ is unbounded then the generator is given ``formally'' by the above equation, that is, L can be realized as a sesquilinear form where
$$\la u, L(A)v \ra = \la Vu, AVv \ra + \la u, -\frac{1}{2}V^2Av \ra + \la -\frac{1}{2}V^2A^*u,v \ra .$$

Also, the generator has the form given in Corollary \ref{3.7} with $G=-\frac{1}{2}V^2$.  If $\hi =L_2 ( \R)$ and $V=i \frac{d}{dx}$ then let $e(x)=\frac{1}{\sqrt{2 \pi}} \exp{(-x^2/2)}$ and let $T=|e \ra \la e|$.  Then $h=e$ is an associate vector for T and it is an easy exercise to see that 
$$U= \{ u \in L_2(\R): |u \ra \la e \in \mathcal{A} \} \supseteq \{ f \in L_2(\R) : f',f'' \in L_2(\R) \} ,$$
and therefore U is dense in $L_2( \R)$.

\begin{exm}\label{4.3}
{\bf (\cite{arveson} and similar examples produced in \cite{fagnola2} and \cite{powers})} Let $\hi = L_2[0, \infty)$ and define $U_t: \hi \rightarrow \hi$ by 

\[
(U_tg)(x)=\left\{ \begin{array}{lr}
g(x-t) & \text{if }x \geq 0 \\
0 & \text{otherwise}
\end{array}
\right.
\]

Then $(U_t)_{t \geq0}$ is a strongly continuous semigroup of isometries whose generator $D$ is differentiation.  Let $f \in L_2(0, \infty)$ be what we get by normalizing $u(x)=e^{-x}$ (i.e. $f=\frac{u}{\| u \|})$ then define $\omega : \mathcal{B}(\hi) \rightarrow \C$ by $\omega (A)= \la f,Af \ra$.  Define the completely positive maps $\phi_t: \mathcal{B}(\hi) \rightarrow \mathcal{B}(\hi)$ where
$$\phi_t(A)= \omega(A)E_t+U_tAU_t^*$$ 
for all $t \geq 0$ where $E_t$ is the projection onto the subspace $L_2(0, t) \subseteq L_2(0, \infty)$.  Then $( \phi_t)_{t \geq 0}$ is a QMS.
\end{exm}

First note that for $A \in \mathcal{B}(\hi)$,
$$\omega(U_tAU_t^*)= \la U_t^*f,AU_t^* f \ra = \left\la \frac{e^{-( \cdot +t)}}{\| u \|},A\left( \frac{e^{-( \cdot +t)}}{\| u \|}\right) \right\ra = e^{-2t} \la f,Af \ra = e^{-2t} \omega (A)$$
where the dots denote the variable of the function.
We claim that $(\phi_t)_{t \geq 0}$ is a semigroup.
First, we want to show that $\omega( \phi_t(A)) = \omega(A)$ for all $A \in \mathcal{B}( \hi)$.  Indeed,
\begin{align*}
\omega( \phi_t(A)) & = \omega ( \omega(A)E_t + U_tAU_t^*)
\\ & = \omega (A) \la f,E_t f \ra + \la  f, U_tAU_t^*f \ra 
\\ & = \omega (A) \la f, (1-U_tU_t^*)f \ra +e^{-2t} \omega(A) && \text{since } E_t=1-U_tU_t^*
\\ & = \omega(A)(1-e^{-2t})+e^{-2t} \omega (A) = \omega (A) .
\end{align*}
So we have that $\omega( \phi_t(A)) = \omega(A)$ for all $A \in \mathcal{B}( \hi)$.  Next, we want to show $\phi_s \phi_t = \phi_{t+s}$.  Let $A \in \mathcal{B}(\hi)$.  Then
$$\phi_s \phi_t(A)= \omega( \phi_t(A))E_s+U_s( \omega(A)E_t+U_tAU_t^*)U_s^*=\omega(A)(E_s+U_sE_tU_s^*)+U_{s+t}A(U_{s+t})^*$$
and, since $E_{s+t}=E_s+U_sE_tU_s^*$, we have that
$$\phi_s \phi_t(A)= \omega (A)E_{s+t}+U_{s+t}A(U_{s+t})^*=\phi_{s+t}(A) .$$
So we have that $(\phi_t)_{t \geq 0}$ is a QMS.  If $L$ denotes the generator of $(\phi_t)_{t \geq 0}$ and $D(L)$ denotes the domain of $L$ then 

\begin{align} \label{GA1} 
L(A) & = \sigma-\text{weak}- \lim_{t \to 0}\limits \frac{1}{t} (\phi_t (A) - A) \nonumber \\
& = \sigma-\text{weak}-\lim_{t \to 0}\limits \left( \frac{\omega (A)E_t}{t} + \frac{U_t A U_t^* -A}{t} \right) 
\quad \text{for all }A \in D(L).
\end{align}
By Example~\ref{1.3} the generator of the QMS $(A \mapsto U_t A U_t^*)_{ t \geq 0}$ is equal to $\alpha_D$ where $\alpha_D(A)=DA+AD^*$. 
By \eqref{GA1}, if a bounded operator $A$ belongs to $D(\alpha_D)$ and the kernel of $\omega$, (i.e. $\omega (A)=0$), then 
$A \in D(L)$. Now fix a normalized vector $e \in L_2[0, \infty )$ such that $D(e) \in L_2[0, \infty )$ and $\langle e , f \rangle =0$ 
and use this vector $e$ in place of $Th$ to define the subspace $U$ of Corollary~\ref{3.7}, 
(for example take the positive finite rank operator
$T$ to be equal to $|e\rangle \langle e |$ and the associate vector $h$ of $T$ to be equal to $e$; it is easy to verify that 
$ T \in D(L)$, $ Th =e$, and $|Th\rangle \langle Th | \in D(L)$). Using the fact that 
$\text{ker}\, \omega \cap D(\alpha_D) \subseteq D(L)$,
is  easy to verify that for all 
$x \in L_2[0,\infty)$ with $D(x) \in L_2[0, \infty)$ and $\langle x , f \rangle =0$ we have that the following three conditions 
are satisfied: $|x \rangle \langle e | \in D(L)$, 
$ |x \rangle \langle e | ( |x \rangle \langle e | )^* = \| e \|^2 | x \rangle \langle x | \in D(L)$, and 
$ ( |x \rangle \langle e | )^* |x \rangle \langle e | = \| x \|^2 | e\rangle \langle e | \in D(L)$. Thus by Definition~\ref{domainalgebra} $|x \rangle \langle x | \in \mathcal{A}$ where $\mathcal{A}$ denotes the domain algebra of $L$.
Hence
\begin{equation} \label{GA2}
\{ x \in L_2[0, \infty ): D(x) \in L_2[0, \infty ) \text{ and } \langle x , f \rangle =0 \} \subseteq U.
\end{equation}
Arveson proves \cite[Proposition, pg. 75]{arveson} that the strong operator closure $\overline{\mathcal{A}}^{\text{SOT}}$ of
the domain algebra is equal to the set of bounded operators $A$ such that both $A$ and its adjoint $A^*$ have $f$ as an eigenvector (necessarily corresponding to complex conjugate eigenvalues). Thus for $x \in L_2[0, \infty)$, if 
$A=|x\rangle \langle e| \in \mathcal{A}$ then $\langle f , e \rangle =0$. Therefore
\begin{equation} \label{GA3}
 U \subseteq \{ x \in L_2[0, \infty ) :  \langle x, f\rangle =0 \} .
\end{equation}
We do not have more precise description of $U$ besides \eqref{GA2} and \eqref{GA3}. Equation \eqref{GA3} shows that 
$U$ is not dense in $\mathcal{H}$.
Note that the domain algebra 
$\mathcal{A}$ contains operators which are not in the kernel of $\omega$ 
(since $|f \rangle \langle f | \in \overline{\mathcal{A}}^{\text{SOT}}$ by \cite[Proposition, pg. 75]{arveson}). Hence
the operator $V$ and the unital $^*$-representation $\pi$ which appear in the statement of Corollary~\ref{3.7}
are non-zero.  The operator $G$ which appears in the statement of Corollary~\ref{3.7} is not necessarily equal to the 
generator $D$ of $(U_t)_{t \geq 0}$. Formulas for $V$, $\pi$ and $G$ are given in Corollary~\ref{3.7} and 
Theorems~\ref{3.9} and \ref{3.4} and we do not know simpler formulas for this particular example.

\end{spacing}

%

%
\

\end{document}